\newif{\ifMarginalComments}
\newcounter{ncomm}
\def\ie{\textit{i.e.}}
\def\eg{\textit{e.g.}}
\newenvironment{varitemize}
{
\begin{list}{\labelitemii}
{\setlength{\itemsep}{0pt}
 \setlength{\topsep}{0pt}
 \setlength{\parsep}{0pt}
 \setlength{\partopsep}{0pt}
 \setlength{\leftmargin}{15pt}
 \setlength{\rightmargin}{0pt}
 \setlength{\itemindent}{0pt}
 \setlength{\labelsep}{5pt}
 \setlength{\labelwidth}{10pt}
}}
{
 \end{list} 
}
\newcommand{\refsect}[1]{Sect.~\ref{sec:#1}}
\newcommand{\reftab}[1]{Table~\ref{tab:#1}}
\newcommand{\refdef}[1]{Definition~\ref{def:#1}}
\newcommand{\refth}[1]{Theorem~\ref{th:#1}}
\newcommand{\reflemma}[1]{Lemma~\ref{lemma:#1}}
\newcommand{\st}{\mathrel{|}} 
\newcommand{\gpipe}{\mathrel{\big |}} 
\newcommand{\dom}{\mathrm{dom}}
\newcommand{\op}[1]{#1^{\mathrm{op}}} 
\newcommand{\bigo}{\mathcal O} 
\newcommand{\NN}{\mathbb{N}} 
\newcommand{\Nat}{\NN} 
\newcommand{\natone}{n} 
\newcommand{\Words}{\mathbb W} 
\newcommand{\strone}{\xi} 
\newcommand{\strtwo}{\upsilon} 
\newcommand{\emstr}{\varepsilon} 
\newcommand{\len}[1]{|#1|} 
\newcommand{\IChan}{\mathcal I} 
\newcommand{\ichan}{i} 
\newcommand{\OChan}{\mathcal O} 
\newcommand{\ochan}{o} 
\newcommand{\Vis}{\mathcal A_v} 
\newcommand{\Act}{\mathcal A} 
\newcommand{\acone}{\alpha} 
\newcommand{\sa}{\tau} 
\newcommand{\lsysone}{\mathcal S}
\newcommand{\lsystwo}{\mathcal T}
\newcommand{\Web}[1]{|#1|} 
\newcommand{\stateone}{s} 
\newcommand{\statetwo}{t} 
\newcommand{\initone}{s_0} 
\newcommand{\inittwo}{t_0} 
\newcommand{\trans}{\mathrm{trans}}
\newcommand{\ls}[3]{#1\stackrel{#2}{\longrightarrow}#3} 
\newcommand{\lstaurel}{\Longrightarrow} 
\newcommand{\lstau}[2]{#1\lstaurel #2}
\newcommand{\lss}[3]{#1\stackrel{#2}{\lstaurel}#3} 
\newcommand{\Div}[1]{#1\!\!\Uparrow} 
\newcommand{\simone}{\mathcal R}
\newcommand{\bisim}{\approx}
\newcommand{\behone}{\mathfrak b}
\newcommand{\behfun}{\mathfrak f}
\newcommand{\behserv}{\mathfrak s}
\newcommand{\behonline}{\mathfrak o}
\newcommand{\fun}{\mathrm{fun}}
\newcommand{\lang}{\mathrm{lang}}
\newcommand{\Fun}{\mathcal{FUN}}
\newcommand{\Serv}{\mathcal{SERV}}
\newcommand{\EvType}{\mathcal E}
\newcommand{\evtypone}{a} 
\newcommand{\indep}{\scoh}
\newcommand{\Ev}{\mathrm{Ev}}
\newcommand{\runone}{\varphi} 
\newcommand{\runtwo}{\psi} 
\newcommand{\eqrun}{\sim} 
\newcommand{\prerun}{\lesssim} 
\newcommand{\evone}{e} 
\newcommand{\evtwo}{d} 
\newcommand{\twght}{w_t}
\newcommand{\swght}{w_s}
\newcommand{\EvTypeOf}[1]{\mathrm{evtype}(#1)}
\newcommand{\bitone}{b}
\newcommand{\proczero}{\mathbf{0}}
\newcommand{\procone}{P}
\newcommand{\proctwo}{Q}
\newcommand{\procthree}{R}
\newcommand{\procfour}{S}
\newcommand{\pvarone}{A}
\newcommand{\relpara}{\mathrel{|}}
\newcommand{\para}[2]{#1\relpara #2}
\newcommand{\fv}[1]{\mathsf{FV}(#1)}
\newcommand{\ite}[3]{#1.(#2,#3)}
\newcommand{\inp}[2]{#1(#2)}
\newcommand{\inpc}[2]{#1(#2).}
\newcommand{\out}[2]{\overline{#1}\langle #2\rangle}
\newcommand{\outc}[2]{\overline{#1}\langle #2\rangle.}
\newcommand{\evarone}{x}
\newcommand{\evartwo}{y}
\newcommand{\eqz}[1]{\mathtt{0}_{?}(#1)}
\newcommand{\eqe}[1]{\mathit{\varepsilon_{?}}(#1)}
\newcommand{\apz}[1]{\mathtt{0}(#1)}
\newcommand{\apu}[1]{\mathtt{1}(#1)}
\newcommand{\tail}[1]{\mathit{tail}(#1)}
\newcommand{\ichexpone}{I}
\newcommand{\ochexpone}{O}
\newcommand{\expone}{E}
\newcommand{\exptwo}{F}
\newcommand{\bexpone}{B}
\newcommand{\true}{\mathit{tt}}
\newcommand{\false}{\mathit{ff}}
\newcommand{\defeq}{\stackrel{\mathrm{def}}{=}}
\newcommand{\ptag}{p}
\newcommand{\memone}{M}
\newcommand{\qfunone}{\Theta}
\newcommand{\qone}{q}
\newcommand{\msetone}{\Gamma}
\newcommand{\cfg}[2]{[#1]#2}
\newcommand{\cfgone}{C}
\newcommand{\mtrans}[1]{\stackrel{#1}{\longrightarrow}}
\newcommand{\mtautrans}{\mtrans\sa}
\newcommand{\val}[2]{#2^{#1}}
\newcommand{\labsem}[1]{[#1]}
\newcommand{\Ops}{\mathrm{Op}}
\newcommand{\opone}{l}
\newcommand{\walts}[1]{\llbracket #1\rrbracket}
\newcommand{\wght}{\$}
\newcommand{\scost}[1]{\mathrm{space}(#1)}
\newcommand{\tcostmem}[2]{\mathrm{time}_{#1}(#2)}
\newcommand{\tot}[1]{\mathrm{tot}(#1)}
\newcommand{\tcost}[1]{\mathrm{time}(#1)}
\newcommand{\Inp}[1]{\mathrm{Inp}(#1)}
\newcommand{\inpsize}[1]{\|#1\|}
\newcommand{\BTS}[2]{\mathbf{BTS}(#1,#2)}
\newcommand{\Time}{\mathbf{TIME}}
\newcommand{\ATime}{\mathbf{ATIME}}
\newcommand{\FunTS}[2]{\mathbf{FUNTS}(#1,#2)}
\newcommand{\ServTS}[2]{\mathbf{SERVTS}(#1,#2)}
\newcommand{\classP}{\ensuremath{\mathbf P}}
\newcommand{\classPSPACE}{\ensuremath{\mathbf{PSPACE}}}
\newcommand{\classL}{\ensuremath{\mathbf L}}
\newcommand{\classNC}{\ensuremath{\mathbf{NC}}}
\newcommand{\classNP}{\ensuremath{\mathbf{NP}}}
\newcommand{\classIP}{\ensuremath{\mathbf{IP}}}
\newcommand{\encode}[1]{{#1}^\bullet}
\newcommand{\esm}{\textsc{esm}}
\newcounter{theo}
\newcounter{rem}
\newcounter{de}
\newtheorem{definition}[de]{Definition}
\newtheorem{lemma}[theo]{Lemma}
\newtheorem{proposition}[theo]{Proposition}
\newtheorem{corollary}[theo]{Corollary}
\newtheorem{theorem}[theo]{Theorem}
\newtheorem{remark}[rem]{Remark}
\newenvironment{proof}{\begin{trivlist}
       \item[\hskip \labelsep {\bfseries Proof.}]}{\hfill $\Box$ \end{trivlist}}
\newcommand{\qed}{\hfill$\Box$}
\begin{document}


\title{Computational Complexity\\ of Interactive Behaviors}

\author{
Ugo Dal Lago\footnote{Dip.~di Scienze dell'Informazione -- Univ.\ 
di Bologna, Italy
}
\and
Tobias Heindel\footnote{CEA -- LIST, Gif-sur-Yvette,  France
}
\and
Damiano Mazza\footnote{LIPN -- CNRS and Universit\'e Paris 13, France
}
\and
Daniele Varacca\footnote{PPS -- CNRS and Universit\'e Paris Diderot, France
}
}

\date{}

\maketitle

\begin{abstract}
	The theory of computational complexity focuses on functions and, hence, studies programs whose interactive behavior is reduced to a simple question/answer pattern.  We propose a broader theory whose ultimate goal is expressing and analyzing the intrinsic difficulty of fully general interactive behaviors.  To this extent, we use standard tools from concurrency theory, including labelled transition systems (formalizing behaviors) and their asynchronous extension (providing causality information).  Behaviors are implemented by means of a multiprocessor machine executing CCS-like processes. The resulting theory is shown to be consistent with the classical definitions: when we restrict to functional behaviors (i.e., question/answer patterns), we recover several standard computational complexity classes.
\end{abstract}

\section{Introduction}
In the early days, computers were considered as oracles: one would have a question and the computer would provide the answer. For instance,
one day the American army had just launched a rocket to the Moon, and the four star General typed in two questions to the computer: (1) Will the rocket reach the Moon? (2) Will the rocket return to the Earth? The computer did some calculations for some time and then ejected a card which read: ``Yes.'' The General was furious; he didn't know whether ``Yes'' was the answer to the first question, the second or both. Therefore he angrily typed in ``Yes, what?''. The computer did some more calculations and then printed on a card: ``Yes, Sir.''\footnote{Adapted from Raymond Smullyan: \emph{What is the name of this book?}}

That every computation may eventually be reduced to the input/output pattern is an assumption underlying most of classical computability theory. The theory of computational complexity is an excellent example: it studies the intrinsic difficulty of \emph{problems}, which are nothing but ``yes or no'' questions. Accordingly, the classical methods that measure the complexity of a program ignore the possibility that it may interact with its environment between the initial request and the final 
answer; even when a more complex interaction pattern is considered (\eg, in interactive proofs~\cite{IP}), it often is seen as yet another way to solve problems (viz.\ the class \classIP, which is a class of problems).

Nowadays, we live in a world of ubiquitous computing systems that are highly interactive and communicate with their environments, following possibly complicated protocols. These computing systems are fundamentally different from those that just provide answers to questions without any observable intermediate actions. To study this phenomenon of \emph{interactive computation}~\cite{Wegener}, theoretical computer scientists have developed several formalisms and methodologies, such as process calculi and algebras.

However, little has been done so far to tackle the computational complexity of interactive systems (one of the few examples being the competitive analysis of online algorithms~\cite{OnlineAlgos}). Note that, as mentioned above, this issue is beyond the classical theory of computational complexity. 
Thus, we set out to provide grounds for a revised theory of computational complexity that is capable of gauging the efficiency of genuinely interactive behaviors.

The first conceptual step is the formalization of \emph{behaviors}. Our approach follows standard lore of concurrency theory: a behavior is an equivalence class of \emph{labelled transition systems} (\textsc{lts}), which in turn are usually specified using process calculi, such as Milner's \textsc{ccs}~\cite{Milner:Pi}. In this paper, we use a variation of bisimilarity as behavioral equivalence; however, other equivalences that have been proposed in the literature, such as coupled similarity and testing, work equally well, as long as certain minimal requirements are satisfied. Shifting the focus towards behaviors, the fundamental question of classical computational complexity ``What is the cost of solving a problem (or implementing a function)?'' becomes ``What is the cost of implementing a behavior?''.

A suitable cost model is not as easily found as in the functional case where we just measure the resources (time, space) required to compute the answer as a function of the size of the question. Of course, the resources depend on the chosen computational model (such as Turing machines), but the general scheme does not depend on the specific model. We propose a notion of cost for general interactive behaviors that abstracts away from a specific model. Costs are attributed to events in \emph{weighted asynchronous \textsc{lts}} (or \textsc{walts}): asynchrony is a standard feature that is added to transition systems to represent causal dependencies~\cite{WN95}, which we need to generalize the trivial dependency between questions and answers; weights are used to specify additional quantitative information about space and time consumption.

Finally, we introduce a computational model, the \emph{process machine}, which implements behaviors (just as Turing machines implement functions) by executing concurrent programs written in a \textsc{ccs}-based language. Such a machine has an unbounded number of processors each equipped with a private memory and capable of performing basic string manipulation and communicating asynchronously with other processors or the external environment. The process machine admits a natural semantics in terms of \textsc{walts}s and thus provides us with a non-trivial, paradigmatic instance of our abstract framework for measuring the complexity of behaviors.

Complexity classes are then defined as sets of behaviors that can be implemented by a process running within given time and space bounds on the process machine. We conclude by showing that if we restrict to functional behaviors (\ie, trivial input/output patterns) we obtain several standard complexity classes; thus, at least in many paradigmatic cases, we have in fact a consistent extension of complexity theory into the realm of interactive computation. As a further sanity check, we verify that the complexity of a function is invariant under some different (but intuitively equivalent) representations that may be given of it in terms of behaviors.

\section{Behaviors}
\label{sec:Beh}
In this section we formally define behaviors as equivalence classes of labelled transition systems. Such systems can receive messages on some \emph{input channels}, send messages on some \emph{output channels}, and perform internal, invisible, computation. With the aim of being as concrete as possible, we consider messages to be binary strings.
We denote by $\Words=\{0,1\}^\ast$ the set of such strings, with $\emstr$ denoting the empty string. We also fix two disjoint sets $\IChan,\OChan$ of input and output channel names. 

\begin{definition}[Labelled transition system]
	\label{def:LTS}
	An \emph{input action} (resp.\ \emph{output action}) is an element of $\IChan\times\Words$ (resp.\ $\OChan\times\Words$); together, they form the set of \emph{visible actions}, denoted by $\Vis$. The set of \emph{actions} is $\Act=\Vis\cup\{\sa\}$, where $\sa$ is the \emph{internal action}.

	A \emph{labelled transition system} (\textsc{lts} for short) is a triple $\lsysone=(\Web\lsysone,\initone,\trans_\lsysone)$, where $\Web\lsysone$ is a set, whose elements are called \emph{states}, $\initone\in\Web\lsysone$ is the \emph{initial state}, and $\trans_\lsysone\subseteq\Web\lsysone\times\Act\times\Web\lsysone$ is the \emph{transition relation}. 
\end{definition}
Given an \textsc{lts} $\lsysone$, we write $\ls{\stateone}{\acone}{\stateone'}$ when $(\stateone,\acone,\stateone')\in\trans_\lsysone$. 
Since internal computation is invisible, it is standard practice to consider several internal steps as one single, still invisible step.
We denote by $\lstaurel$ the reflexive-transitive closure of $\ls{}{\sa}{}$ and, given $\alpha\in\Vis$, we write $\lss{\stateone}{\alpha}{\statetwo}$ just if there exist $\stateone',\statetwo'$ such that $\stateone\lstaurel\ls{\stateone'}{\alpha}{\statetwo'}\lstaurel\statetwo$. 

The standard notion of equivalence of transition systems is \emph{bisimilarity}.

\begin{definition}[Bisimilarity]
	Let $\lsysone,\lsystwo$ be \textsc{lts}s, with initial states $\initone,\inittwo$, respectively. A \emph{simulation} from $\lsysone$ to $\lsystwo$ is a relation $\simone\subseteq\Web\lsysone\times\Web\lsystwo$ such that $(\initone,\inittwo)\in\simone$ and, for all $(\stateone,\statetwo)\in\simone$, we have:
	\begin{enumerate}
		\item if $\ls{\stateone}{\acone}{\stateone'}$ with $\acone\in\Vis$, then there exists $\statetwo'$ such that $\lss{\statetwo}{\acone}{\statetwo'}$ and $(\stateone',\statetwo')\in\simone$;
		\item if $\ls{\stateone}{\sa}{\stateone'}$, then there exists $\statetwo'$ such that $\lstau{\statetwo}{\statetwo'}$ and $(\stateone',\statetwo')\in\simone$.
	\end{enumerate}
A simulation $\simone$ from $\lsysone$ to $\lsystwo$ is a \emph{bisimulation} if $\op\simone=\{(t,s)\in \Web\lsystwo\times\Web\lsysone \st (s,t)\in \simone\}$ is a simulation from $\lsystwo$ to $\lsysone$. We define $\lsysone\bisim\lsystwo$ iff there exists a bisimulation between $\lsysone$ and $\lsystwo$. This relation is called \emph{bisimilarity}.	
\end{definition}
Bisimilarity can be shown to be an equivalence relation.

For our purposes we furthermore require that the equivalence does not \emph{introduce
divergence}. Given $\stateone\in\Web\lsysone$, we say that \emph{there is a divergence at $\stateone$} (denoted as $\Div\stateone$) if there exists an infinite sequence $\ls\stateone\sa{\ls{\stateone_1}\sa{\ls{\stateone_2}\sa\cdots}}$.
\begin{definition}[Divergence-sensitive bisimilarity]
We say that a (bi)simulation $\simone$ between $\lsysone$ and $\lsystwo$  \emph{does not introduce divergence} if, for all $(\stateone,\statetwo)\in\simone$, $\Div\statetwo$ implies $\Div\stateone$. We define \emph{divergence-sensitive bisimilarity}, denoted by $\bisim_d$, by requiring the existence of a bisimulation not introducing divergence.
\end{definition}

For our purposes, weaker equivalences (such as \emph{coupled simulation} \cite{ParrowSjodin,vanGlabbeek}) suffice, and might actually even be desirable. Whatever equivalence is chosen, the essential point is that it does not introduce divergence.

\begin{definition}[Behavior]
	A \emph{behavior} is a $\bisim_d$-equivalence class.
\end{definition}

In the sequel, it will be useful to have a compact notation for describing \textsc{lts}'s. For this, we shall use a notation similar to the syntax of Milner's \textsc{ccs}~\cite{Milner:Pi}. For instance, if $f:\Words\rightarrow\Words$ is a function, $\inp{\ichan}{x}.\out{\ochan}{f(x)}$ denotes the \textsc{lts} whose states are $\{\initone\}\cup\bigcup_{\strone\in\Words}\{\stateone_\strone,\statetwo_\strone\}$ and whose transitions are $\ls\initone{\inp\ichan\strone}{\stateone_\strone}$ and $\ls{\stateone_\strone}{\out{\ochan}{f(\strone)}}{\statetwo_\strone}$, for all $\strone\in\Words$. This kind of \textsc{lts} is used to define the behaviors that correspond to classical input/output computations.

\begin{definition}[Functional behavior]
	\label{def:FunBeh}
	In the following, we fix two channels $\ichan\in\IChan$ and $\ochan\in\OChan$. Let $f:\Words\rightarrow\Words$ be a function. The \emph{functional behavior} induced by $f$, denoted by~$\behfun_f$, is the equivalence class of $\inpc{\ichan}{x}\out{\ochan}{f(x)}$. We denote by $\Fun$ the set of all functional behaviors.
\end{definition}
\begin{lemma}
	\label{lemma:FunUnique}
	Let $f,g:\Words\rightarrow\Words$. Then, $f=g$ iff $\behfun_f=\behfun_g$.
\end{lemma}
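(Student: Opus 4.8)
The statement is an "iff", and the right-to-left direction (contrapositive) is the substantive one. The plan is as follows.

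\textbf{Easy direction ($f = g \Rightarrow \behfun_f = \behfun_g$).} This is immediate: if $f$ and $g$ are literally the same function, then the \textsc{lts}s $\inpc{\ichan}{x}\out{\ochan}{f(x)}$ and $\inpc{\ichan}{x}\out{\ochan}{g(x)}$ are syntactically identical, hence equal, hence certainly $\bisim_d$-equivalent, so they determine the same behavior.

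\textbf{Substantive direction ($\behfun_f = \behfun_g \Rightarrow f = g$).} I would prove the contrapositive: suppose $f \neq g$, so there is some $\strone \in \Words$ with $f(\strone) \neq g(\strone)$; I must show $\inpc{\ichan}{x}\out{\ochan}{f(x)} \not\bisim_d \inpc{\ichan}{x}\out{\ochan}{g(x)}$. In fact it suffices to show they are not even bisimilar, since $\bisim_d$ refines $\bisim$. Consider the description of these \textsc{lts}s given just before \refdef{FunBeh}: from the initial state, the only transitions are $\ls{\initone}{\inp{\ichan}{\strtwo}}{\stateone_\strtwo}$ for each $\strtwo\in\Words$, and from $\stateone_\strtwo$ the only transition is the single visible output $\ls{\stateone_\strtwo}{\out{\ochan}{f(\strtwo)}}{\statetwo_\strtwo}$ (and symmetrically with $g$); there are no $\sa$-transitions at all, so $\lstaurel$ is the identity and $\lss{}{\alpha}{}$ coincides with $\ls{}{\alpha}{}$. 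Now suppose towards a contradiction that $\simone$ is a bisimulation relating the two initial states. Performing the input action $\inp{\ichan}{\strone}$ on the left, from $\initone$ to $\stateone_\strone$, the bisimulation must be matched on the right by $\lss{\initone'}{\inp{\ichan}{\strone}}{\,?\,}$, i.e.\ by the transition to $\stateone'_\strone$, with $(\stateone_\strone,\stateone'_\strone)\in\simone$. But from $\stateone_\strone$ the left system can do $\out{\ochan}{f(\strone)}$, while from $\stateone'_\strone$ the only visible action available is $\out{\ochan}{g(\strone)}$, and since $f(\strone)\neq g(\strone)$ these are different labels in $\Vis$; so the move cannot be matched, contradicting clause~(1) of the simulation definition. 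Hence no bisimulation exists, so the behaviors are distinct.

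\textbf{Main obstacle.} There is no deep obstacle here; the only thing requiring care is being explicit about the structure of the \textsc{lts}s $\inpc{\ichan}{x}\out{\ochan}{f(x)}$ — in particular that the initial state is distinct from all $\stateone_\strtwo$ and all $\statetwo_\strtwo$, that distinct inputs lead to distinct states, and that there are no hidden $\sa$-moves — so that the matching argument above is airtight and one genuinely uses $f(\strone)\neq g(\strone)$ to separate labels. A secondary point worth a sentence is that it is enough to refute plain bisimilarity, since $\bisim_d \subseteq \bisim$ and behaviors are $\bisim_d$-classes; distinct bisimilarity classes are a fortiori distinct $\bisim_d$-classes.
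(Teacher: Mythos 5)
Your proof is correct: the forward direction is trivial, and for the converse your label-separation argument (the input $\inp{\ichan}{\strone}$ must be matched exactly since there are no $\sa$-transitions, after which the outputs $\out{\ochan}{f(\strone)}$ and $\out{\ochan}{g(\strone)}$ cannot match when $f(\strone)\neq g(\strone)$), together with the remark that refuting plain bisimilarity suffices because $\bisim_d$ refines $\bisim$, is exactly the routine argument the paper has in mind — indeed the paper states \reflemma{FunUnique} without proof, treating it as immediate from the explicit description of the \textsc{lts} $\inpc{\ichan}{x}\out{\ochan}{f(x)}$. No gaps; nothing further is needed.
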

\begin{definition}[Language of a functional behavior]
	\label{def:Lang}
	By \reflemma{FunUnique}, every functional behavior $\behone\in\Fun$ determines a unique function $\fun\behone$ on $\Words$ such that $\behone=\behfun_{\fun\behone}$. This induces a language (\ie, a subset of $\Words$) $\lang\behone=\{\strone\in\Words\st \fun\behone(\strone)=\emstr\}$.
\end{definition}

\section{Abstract Cost Models for Interactive Computation}
\label{sec:Icmplx}
In order to define the complexity of behaviors, we need to add concurrency and causality information to keep track of the dependencies of outputs on relevant, ``previous'' inputs and to identify independent ``threads'' of computation in a parallel algorithm. There are several models of concurrency in the literature (see \cite{WN95} for an overview of standard approaches). \emph{Asynchronous transition systems} \cite{WN95}, which are an extension of the well known model of Mazurkiewicz traces~\cite{Maz86}, are sufficiently expressive for our purposes. In order to speak about complexity, we shall add a notion of \emph{weight}: on transitions, for time complexity, and on states, for space complexity. This justifies our choice of asynchronous transition systems, which have an explicit notion of state, over the \emph{a priori} simpler model of Mazurkiewicz traces.
\begin{definition}[Asynchronous LTS \cite{WN95}]
	\label{def:ALTS}
	An \emph{asynchronous \textsc{lts}} \textsc{(alts)} is a tuple $\lsysone=(\Web\lsysone,\initone,\EvType(\lsysone),\trans_\lsysone,\indep_\lsysone)$ where $\Web\lsysone$ is a set of \emph{states}, $\initone\in\Web\lsysone$ is the \emph{initial state}, $\EvType(\lsysone)$ is a set of \emph{event types}, $\trans_\lsysone\subseteq\Web\lsysone\times\EvType(\lsysone)\times\Web\lsysone$ is the \emph{transition relation} and $\indep_\lsysone$ is an antireflexive, symmetric relation on $\EvType(\lsysone)$, called \emph{independence relation}, such that (using the notations of \refdef{LTS}):
	\begin{enumerate}
		\item $\evtypone\in\EvType(\lsysone)$ implies $\stateone\mtrans{\evtypone}\statetwo$ for some $\stateone,\statetwo\in\Web\lsysone$;
		\item $\ls{\stateone}{\evtypone}{\stateone'}$ and $\ls{\stateone}{\evtypone}{\stateone''}$ implies $\stateone'=\stateone''$;
		\item $\evtypone_1\indep\evtypone_2$ and $\ls{\stateone}{\evtypone_1}{\stateone_1}$, $\ls{\stateone}{\evtypone_2}{\stateone_2}$ implies $\exists\statetwo\in\Web\lsysone$ s.t.\ $\ls{\stateone_1}{\evtypone_2}{\statetwo}$ and $\ls{\stateone_2}{\evtypone_1}{\statetwo}$;
		\item $\evtypone_1\indep\evtypone_2$ and $\ls{\stateone}{\evtypone_1}{\ls{\stateone_1}{\evtypone_2}{\statetwo}}$ implies $\exists\stateone_2\in\Web\lsysone$ s.t.\ $\ls{\stateone}{\evtypone_2}{\ls{\stateone_2}{\evtypone_1}{\statetwo}}$.
	\end{enumerate}
\end{definition}

In complete analogy to the definitions for Mazurkiewicz traces, 
we have trace equivalence classes of transition sequences in \textsc{alts}s 
and we define events with the expected causality relation that relates them. 

\begin{definition}[Run, trace equivalence, event, causal order]
	A \emph{run} in an \textsc{alts} $\lsysone$ is a finite, possibly empty sequence of consecutive transitions $\runone=\stateone\mtrans{\evtypone_1}\cdots\mtrans{\evtypone_n}\statetwo$, which we denote by $\stateone\mtrans{\runone}\statetwo$. Concatenation of runs is denoted by juxtaposition. \emph{Trace equivalence}, denoted by $\eqrun$, is the smallest equivalence relation on runs such that, for all $\evtypone_1\indep\evtypone_2$, if $\runone=\stateone'\mtrans{\runone'}\stateone\mtrans{\evtypone_1}\stateone_1\mtrans{\evtypone_2}\statetwo\mtrans{\runone''}\statetwo'$ and $\runtwo=\stateone'\mtrans{\runone'}\stateone\mtrans{\evtypone_2}\stateone_2\mtrans{\evtypone_1}\statetwo\mtrans{\runone''}\statetwo'$ with $\stateone,\stateone_1,\stateone_2,\statetwo$ as in point 3 of \refdef{ALTS}, then $\runone\eqrun\runtwo$. We define a preorder between runs by $\runone\prerun\runtwo$ iff $\runtwo\sim\runone\runone'$ for some run $\runone'$. A run $\runone$ is \emph{essential} if it is of the form $\initone\mtrans{\runone'}\stateone\mtrans{\evtypone}\statetwo$, with $\initone$ the initial state of $\lsysone$, and for all $\runtwo\eqrun\runone$, we have $\runtwo=\initone\mtrans{\runtwo'}\stateone\mtrans{\evtypone}\statetwo$ with $\runtwo'\eqrun\runone'$.

	An \emph{event} is a $\eqrun$-equivalence class of essential runs. We denote by $\Ev(\lsysone)$ the set of events of $\lsysone$; it is a poset under the quotient relation $\prerun/\!\!\eqrun$, which we denote by $\leq$ and call \emph{causal order}. Note that, if $\evone\in\Ev(\lsysone)$, all $\runone\in\evone$ ``end'' with the same transition; we denote by $\EvTypeOf\evone$ the event type of this transition.
\end{definition}

Finally we can add data for ``time consumption'' of event types and the ``size'' of states, 
which allow to define the time and space cost of events. 

\begin{definition}[Weights]
	A \emph{weighted \textsc{alts}} \textsc{(walts)} is a triple $(\lsysone,\twght,\swght)$, where $\lsysone$ is an \textsc{alts}, $\twght\colon\EvType(\lsysone) \to \Nat$ is the \emph{time weight},  and $\swght \colon \Web{\lsysone} \to \Nat$ is the \emph{space weight}.

	\sloppy{Let $\runone=\stateone_0\mtrans{\evtypone_1}\cdots\mtrans{\evtypone_n}\stateone_n$ be a run. Its \emph{space cost} is $\scost\runone=\max_{0\leq i\leq n}\swght(\stateone_i)$. The \emph{space cost} of an event $\evone\in\Ev(\lsysone)$ is $\scost\evone=\max_{\runone\in\evone}\scost\runone$.}

	Let $\evone\in\Ev(\lsysone)$. We denote by $\tot\evone$ the set of chains of events, \ie, totally ordered subsets of $(\Ev(\lsysone),\leq)$, whose maximum is $\evone$. The \emph{time cost} of $\evone$ is
	$$\tcost{\evone}=\max_{X\in\tot\evone}\sum_{\evtwo\in X}\twght(\EvTypeOf\evtwo).$$
\end{definition}

Roughly speaking,  the space cost of events is independent of their scheduling; 
however, for the time cost of an event we assume an ``ideal'' scheduler that fully exploits all concurrency of the \textsc{walts}.


\section{The Process Machine}
\label{sec:Proc}
We start by defining \emph{string expressions} and \emph{Boolean expressions}, which are generated by the following grammar:
\begin{align*}
	\expone,\exptwo & ::=
	\evarone \gpipe
	\strone \gpipe
	\apz{\expone} \gpipe
	\apu{\expone} \gpipe
	\tail{\expone} \\
	\bexpone & ::=
	\true \gpipe
	\false \gpipe
	\eqz{\expone} \gpipe
	\eqe{\expone},
\end{align*}
where $\evarone$ ranges over a denumerably infinite set of \emph{variables}, and $\strone$ ranges over $\Words$.

\emph{Processes} are defined by the following grammar:
\begin{align*}
	\procone,\proctwo&::=
	\proczero \gpipe
	\pvarone\langle\expone_1,\ldots,\expone_n\rangle \gpipe
	\outc\ochexpone\expone\procone \gpipe
	\inpc\ichexpone\evarone\procone \gpipe
	\ite\bexpone\procone\proctwo \gpipe
	\para{\procone}{\proctwo}.
\end{align*}
where $\ochexpone$ stands for either an output channel $\ochan\in\OChan$ or a string expression, $\ichexpone$ stands for either an input channel $\ichan\in\IChan$ or a string expression, $\expone,\expone_1,\ldots,\expone_n$ range over string expressions, and $\pvarone$ ranges over a denumerably infinite set of \emph{process identifiers}, each coming with an \emph{arity} $n\in\Nat$ and a \emph{defining equation} of the form
$$\pvarone(\evarone_1,\ldots,\evarone_n)\defeq\procone$$
where $P$ is a process whose free variables are included in $\evarone_1,\ldots,\evarone_n$. As usual in process calculi, the free variables of a process (denoted by $\fv\procone$) are defined to be the variables not in the scope of an input prefix $\inp\ichexpone\evarone$, which \emph{binds} $\evarone$. A process $\procone$ is \emph{closed} if $\fv\procone=\emptyset$. In the following, all bound variables of a process are supposed to be pairwise distinct.

\begin{table}[t]
	\begin{displaymath}
		\begin{array}{rrcl}
		\mathbf{Nil:} &
			\cfg{(\proczero,\memone)_\ptag,\msetone}\qfunone
			&\quad\mtautrans\quad&
			\cfg{\msetone}\qfunone
		\\
		\mathbf{Rec:} &
			\cfg{(\pvarone\langle\expone_1,\ldots\expone_n\rangle,M)_\ptag,\msetone}{\qfunone}
			&\mtautrans&
			\cfg{(\procone,\{ \evarone_1\mapsto\val\memone{\expone_1},\ldots,\evarone_n\mapsto\val\memone{\expone_n} \})_\ptag,\msetone}{\qfunone}
		\\
		&&& \textrm{with }\pvarone(\evarone_1,\ldots,\evarone_n)\defeq P
		\\
		\mathbf{Snd:} &
			\cfg{(\outc{\expone}{\exptwo}\procone,\memone)_\ptag,\msetone}{\qfunone}
			&\mtautrans&
			\cfg{(\procone,\memone)_\ptag,\msetone}{\qfunone'}
		\\
		&&& \textrm{with }\qfunone'(\val\memone\expone)=\qfunone(\val\memone\expone)\cdot\val\memone\exptwo,
		\\
		&&& \textrm{and }\qfunone'=\qfunone\textrm{ everywhere else}
		\\
		\mathbf{Rcv:} &
			\cfg{(\inpc{\expone}{\evarone}\procone,\memone)_\ptag,\msetone}{\qfunone}
			&\mtautrans&
			\cfg{(\procone,\memone\cup\{ \evarone\mapsto\strone \})_\ptag,\msetone}{\qfunone'}
		\\
		&&& \textrm{only if }\qfunone(\val\memone\expone)=\strone\cdot\qone.\textrm{ Then, }\qfunone'(\val\memone\expone)=\qone
		\\
		&&& \textrm{and }\qfunone'=\qfunone\textrm{ everywhere else}
		\\
		\mathbf{Out:} &
			\cfg{(\outc{\ochan}{\expone}\procone,\memone)_\ptag,\msetone}{\qfunone}
			&\mtrans{\out{\ochan}{\val\memone\expone}}&
			\cfg{(\procone,\memone)_\ptag,\msetone}{\qfunone}
		\\
		\mathbf{Inp:} &
			\cfg{(\inpc{\ichan}{\evarone}\procone,\memone)_\ptag,\msetone}{\qfunone}
			&\mtrans{\inp{\ichan}{\strone}}&
			\cfg{(\procone,\memone\cup\{ \evarone\mapsto\strone \})_\ptag,\msetone}{\qfunone}
		\\
		\mathbf{Cnd:} &
			\cfg{(\ite{\bexpone}{\procone}{\proctwo},\memone)_\ptag,\msetone}{\qfunone}
			&\mtautrans&
			\left\{\begin{array}{ll}
				\cfg{(\procone,\memone)_\ptag,\msetone}{\qfunone} & \textrm{if }\val\memone\bexpone=\true, \\
				\cfg{(\proctwo,\memone)_\ptag,\msetone}{\qfunone} & \textrm{if }\val\memone\bexpone=\false
			\end{array}\right.
		\\
		\mathbf{Spn:} &
			\cfg{(\para{\procone}{\proctwo},\memone)_\ptag,\msetone}{\qfunone}
			&\mtautrans&
			\cfg{(\procone,\memone)_{\ptag0},(\proctwo,\memone)_{\ptag1},\msetone}{\qfunone}
		\end{array}
	\end{displaymath}
	\caption{The transitions of the process machine.}
	\label{tab:ProcMachine}
\end{table}

To assign values to expressions, we use \emph{environments}, \ie, finite partial functions from variables to $\Words$. 
If $\expone$ is a string expression whose variables are all in the domain of an environment $\memone$, we define its \emph{value} $\val\memone\expone$ by induction: $\val\memone\evarone=M(\evarone)$; $\val\memone\strone=\strone$; $\val\memone{\apz{\expone}}=0\val\memone\expone$; $\val\memone{\apu{\expone}}=1\val\memone\expone$; and $\val\memone{\tail{\expone}}=\strone$ if $\val\memone\expone=\bitone\strone$, with $\bitone\in\{0,1\}$. Similarly, we define the \emph{value} of Boolean expressions: $\val\memone\true=\true$; $\val\memone\false=\false$; $\val\memone{\eqz{\expone}}=\true$ if $\val\memone\expone=0\strone$, otherwise it is $\false$; and $\val\memone{\eqe{\expone}}=\true$ if $\val\memone\expone=\emstr$, otherwise it is $\false$.

\begin{definition}[Machine configurations, transitions]
	\label{def:Machine}
	\sloppy{A \emph{processor state} is a triple $(\procone,\memone)_\ptag$ where $\procone$ is a process, $\memone$ is an environment whose domain includes $\fv\procone$, and $\ptag$ is a binary string, the \emph{processor tag}.}

	A \emph{queue function} is a function $\qfunone$ from $\Words$ to finite lists of $\Words$, which is almost everywhere equal to the empty list. In the following, lists of words are ranged over by $\qone$, and we denote by $\cdot$ their concatenation.

	A \emph{configuration} $\cfgone$ is a pair $\cfg\msetone\qfunone$, where $\msetone$ is a set of processor states whose processor tags are pairwise incompatible in the prefix order (\ie, no processor tag is the prefix of another), and $\qfunone$ is a queue function. 

\end{definition}

\begin{definition}[LTS of a process]
	Let $\procone$ be a closed process. We define $\labsem\procone$ to be the \textsc{lts} generated by \reftab{ProcMachine} with the initial state $\cfg{(\procone,\emptyset)_\emstr}{\epsilon}$ (empty environment, tag and queue function).
\end{definition}

The reader acquainted with process algebras will note how, in spite of the presence of output prefixes in the syntax of processes, the machine treats outputs asynchronously: strings are sent (internally or externally) without waiting to synchronize with a receiver.

Given a deterministic Turing machine computing the function $f:\Words\rightarrow\Words$, it is possible to exhibit a closed process $\procone$ such that $\labsem{\procone}\in\behfun_f$; moreover, the execution of this process on the machine uses only one processor. Many more standard, ``functional'' models of computation can be simulated by our process machine (see \ref{sec:Encodings}). However, the process machine is obviously richer, in the sense that it may implement more complex, ``non-functional'' interactive behaviors.

As announced, each transition will be given a weight, and each configuration a size. For every string expression $\expone$ and Boolean expression $\bexpone$, given an environment $\memone$ whose domain contains the variables of $\expone$ and $\bexpone$, we fix positive integers $\tcostmem\memone\expone$ and $\tcostmem\memone\bexpone$, representing the time it takes for a processor with environment $\memone$ to compute the string $\val\memone\expone$ and the Boolean $\val\memone\bexpone$. In the following we denote by $\len\strone$ the length of $\strone\in\Words$, and the size of an environment $\memone$ is $\len{\memone}=\sum_{\evarone\in\dom(\memone)}(\len{\memone(\evarone)}+1)$.

\begin{definition}[Weight of transitions and size of configurations]
	 The \emph{weight} of a machine transition $t$, denoted by $\wght t$, is defined as follows, with reference to \reftab{ProcMachine}:

	\begin{tabular}{rlcrl}
		$\mathbf{Nil:}$ & $\wght t=1$ & \quad\ & $\mathbf{Out:}$ & $\wght t=1+\tcostmem{\memone}{\expone}$ \\
		$\mathbf{Rec:}$ & $\wght t=1+\sum_{i=1}^n\tcostmem{\memone}{\expone_i}$ && $\mathbf{Inp:}$ & $\wght t=1+\len\strone$ \\
		$\mathbf{Snd:}$ & $\wght t=1+\tcostmem{\memone}{\expone}+\tcostmem{\memone}{\exptwo}$ && $\mathbf{Cnd:}$ & $\wght t=1+\tcostmem{\memone}{\bexpone}$ \\
		$\mathbf{Rcv:}$ & $\wght t=1+\tcostmem{\memone}{\expone}+\len{\strone}$ && $\mathbf{Spn:}$ & $\wght t=1+\len\memone$
	\end{tabular}

	If $q$ is a list of strings, its size $\len\qone$ is the sum of the lengths of the strings appearing in $\qone$; then, the size of a queue function $\qfunone$ is $\len\qfunone=\sum_{\strone\in\dom(\qfunone)}\len{\qfunone(\strone)}$. Finally, the \emph{size} of a configuration $\cfgone=\cfg{(\procone_1,\memone_1),\ldots,(\procone_n,\memone_n)}{\qfunone}$ is $\len\cfgone=\len{\qfunone}+\sum_{i=1}^n\len{\memone_i}$.
\end{definition}
\begin{definition}[WALTS of a process]
	We define the set of \emph{operations} as
	$\Ops=\{\mathbf{Nil}$, $\mathbf{Rec}$, $\mathbf{Snd}$, $\mathbf{Rcv}$, $\mathbf{Out}$, $\mathbf{Inp}$, $\mathbf{Cnd}$, $\mathbf{Spn} \}$. Let $\procone$ be a closed process. We define a \textsc{walts} $\walts\procone$ as follows:
	\begin{varitemize}
		\item $\Web{\walts\procone}=\Web{\labsem{\procone}}$;
		\item the initial state is $\cfg{(P,\emptyset)_\emstr}{\epsilon}$;
		\item $\EvType(\walts{\procone})$ is the set of all $(\ptag,\opone,\natone)\in\Words\times\Ops\times\Nat$ s.t.\ in $\labsem{\procone}$ there is a transition $t$ of type $\opone$ performed by a processor whose tag is $\ptag$ and s.t.\ $\wght t=\natone$;
		\item the independence relation is the smallest symmetric relation s.t.\ $(\ptag,\opone,\natone)\indep(\ptag',\opone',\natone')$ holds as soon as $\ptag\neq\ptag'$ and one of the following conditions is met:
		\begin{varitemize}
			\item $\opone\not\in\{\mathbf{Snd}$, $\mathbf{Rcv}$, $\mathbf{Out}$, $\mathbf{Inp}\}$;
			\item $\opone\in\{\mathbf{Snd}$, $\mathbf{Rcv}\}$ and $\opone'\in\{\mathbf{Out}$, $\mathbf{Inp}\}$;
			\item $\opone,\opone'\in\{\mathbf{Snd}$, $\mathbf{Rcv}\}$ and the transitions concern different queues;
			\item $\opone,\opone'\in\{\mathbf{Out}$, $\mathbf{Inp}\}$ and either $\opone\neq\opone'$ or the transitions concern different external channels.
		\end{varitemize}
		\item $\trans_{\walts{\procone}}=\{(\cfgone,(\ptag,\opone,\natone),\cfgone')\st \forall (\cfgone,\acone,\cfgone')\in\trans_{\labsem{\procone}}$ performed by processor $\ptag$ of type $\opone$ and weight $\natone\}$;
		\item the time weight is $\twght((\ptag,\opone,\natone))=\natone$, and the space weight is $\swght(\cfgone)=\len\cfgone$.
	\end{varitemize}
\end{definition}

Note that two $\mathbf{Snd}$/$\mathbf{Rcv}$ transitions on the same queue are never independent. This amounts to forbidding concurrent access to a queue, even when this could be safe. We could consider queues with concurrent access  at the price of some technical complications.
In this extended abstract, we prefer not to address such an arguably minor detail.

\section{Complexity Classes}
\label{sec:Compl}
We now propose our definition of complexity classes of behaviors. We essentially measure the cost of producing an output as a function
of all the inputs that are below it in the causal order.

\begin{definition}[Input and output events, input size]
	Let $\procone$ be a closed process. An \emph{input event} (resp.\ \emph{output event}) of $\walts{\procone}$ is an event $\evtwo\in\Ev(\walts{\procone})$ s.t.\ $\EvTypeOf{\evtwo}$ is an input (resp.\ output) on an external channel. In the input case, if the string read is $\strone$, we set $\len\evtwo=\len\strone+1$.
	\label{def:InputSize}
	Let $\evone$ be an output event, and let $\Inp\evone$ be the set of input events below $\evone$ (w.r.t.\ the causal order). We define the \emph{input size} of $\evone$ as $\inpsize{\evone}=\sum_{\evtwo\in\Inp{\evone}}\len\evtwo$.
\end{definition}

\begin{definition}[Cost of a process]
	\label{def:ProcCost}
	Let $f,g:\Nat\rightarrow\Nat$. We say that $\procone$ \emph{works in time $f$ and space $g$} 
        if for every output event $\evone$ of $\walts\procone$, $\tcost{\evone}\leq f(\inpsize{\evone})$ and $\scost{\evone}\leq g(\inpsize{\evone})$.
\end{definition}

\begin{definition}[Complexity class]
	\label{def:ComplClass}
	Let $f,g:\Nat\rightarrow\Nat$. We define $\BTS f g$ to be the set of behaviors $\behone$ such that
        there exists a process $\procone$ such that $\labsem{\procone}\in\behone$ and $\procone$ works in time $f$ and
        space $g$.
\end{definition}

As sanity check we show that, in the case of functional behaviors, we essentially recover the standard complexity classes.

\begin{definition}[Functional complexity]
	Let $f,g:\Nat\rightarrow\Nat$. We define the set of languages $\FunTS{f}{g}=\lang(\BTS{f}{g}\cap\Fun)$.
\end{definition}

In the following, $\Time(f)$ and $\ATime(f)$ denote the standard time complexity classes (languages decidable by a deterministic and alternating Turing machine in at most $f(n)$ steps, respectively).
\begin{theorem}
	\label{th:FunClasses}
	Let $f,g:\NN\rightarrow\NN$.
	\begin{enumerate}
		\item $\Time(f(n))\subseteq\FunTS{f(n)}{f(n)}$;
		\item $\FunTS{f(n)}{g(n)}\subseteq\Time(\bigo(f(n)g(n)^h))$ for a constant integer $h>0$;
		\item $\ATime(f(n))\subseteq\FunTS{f(n)}{2^{\bigo(f(n))}}$.
	\end{enumerate}
\end{theorem}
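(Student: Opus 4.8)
The plan is to prove the three inclusions separately, each by a simulation argument relating Turing machines to processes on the process machine.

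For item 1, given $L\in\Time(f(n))$, fix a deterministic Turing machine $T$ deciding $L$ in $f(n)$ steps and computing the associated function $\chi_L$ sending accepted words to $\emstr$ and rejected words elsewhere. By the remark after the definition of $\labsem\procone$ (and the encodings promised in the appendix), there is a closed process $\procone$ using a single processor with $\labsem\procone\in\behfun_{\chi_L}$, so $\lang(\labsem\procone)=L$ by \reflemma{FunUnique} and \refdef{Lang}. Since only one processor is ever active, the \textsc{walts} $\walts\procone$ has a totally ordered set of events, so for the unique output event $\evone$ the time cost $\tcost\evone$ is just the sum of all transition weights along the computation, and the space cost $\scost\evone$ is the maximal configuration size. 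The core estimate is then routine: a step-by-step simulation of $T$ can be arranged so that each of $T$'s steps is implemented by $\bigo(1)$ machine transitions each of weight $\bigo(1)$ plus bookkeeping linear in the current tape contents, and the tape contents never exceed $\bigo(f(n))$ in length; hence both $\tcost\evone$ and $\scost\evone$ are $\bigo(f(\inpsize\evone))$. Absorbing the constant (or, if one wants the literal bound $f$ rather than $\bigo(f)$, using a slightly more careful encoding / linear speedup) gives $L\in\FunTS{f(n)}{f(n)}$.

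For item 2, start from $L\in\FunTS{f(n)}{g(n)}$: there is a closed process $\procone$ with $\labsem\procone\in\behone$ for some $\behone\in\Fun$ with $\lang\behone=L$, and $\procone$ works in time $f$ and space $g$. The idea is that a deterministic Turing machine can simulate the whole process machine. On input $\strone$, the simulator performs an $\mathbf{Inp}$ on channel $\ichan$, then does a breadth-/depth-first exploration of the reachable configurations, executing machine transitions until an $\mathbf{Out}$ on $\ochan$ is produced; since $\labsem\procone$ is bisimilar to a functional behavior and $\bisim_d$ does not introduce divergence, such an output is reached after finitely many $\sa$-steps, and the simulator accepts iff the emitted string is $\emstr$. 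The complexity bookkeeping is where the bound comes from: space cost $g(n)$ bounds every configuration size, so at most $g(n)$ symbols of memory and queue content are ever present; the number of processors is bounded by $\bigo(g(n))$; and crucially the \emph{total} number of transitions executed before the output is polynomially bounded in $g(n)$ — each transition either strictly makes progress within a processor (processes are finite, but $\mathbf{Rec}$ can loop, so one must argue that unbounded looping would force divergence, contradicting $\bisim_d$ and the functional behavior, OR bound loop iterations by the space/time budget) — this yields a total step count of $f(n)\cdot g(n)^{\bigo(1)}$, and each simulated step costs the Turing machine $\mathrm{poly}(g(n))$ time to locate and update the relevant processor state and queue. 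Multiplying gives $\Time(\bigo(f(n)g(n)^h))$ for a suitable $h$.

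For item 3, given $L\in\ATime(f(n))$ fix an alternating Turing machine $A$ deciding $L$ in time $f(n)$. The standard way to simulate alternation by parallelism is to spawn one processor per branch of the computation tree. On input $\strone$, the process reads $\strone$, then at each $\exists$ (resp.\ $\forall$) configuration of $A$ uses $\mathbf{Spn}$ to fork processors for the successor configurations, sharing the tape contents via the copied environment; a leaf that accepts sends (say) $1$ on an internal collection channel, and ancestors combine results — an $\exists$-node forwards $1$ iff some child did, a $\forall$-node iff all children did, which is arranged by having each node wait to receive from its children on private queues. Finally the root outputs $\emstr$ iff it computed acceptance. The causal order of the resulting \textsc{walts} mirrors a single root-to-leaf-and-back path in the alternation tree, so for the unique output event $\evone$ the time cost $\tcost\evone$ — which is the maximal weight along a \emph{chain} of events — is $\bigo(f(\inpsize\evone))$ (the depth of the tree is $f(n)$, each level costs $\bigo(1)$ plus handling $\bigo(f(n))$-length tape contents, so altogether $\bigo(f(n)^2)$ or, with careful incremental updates, $\bigo(f(n))$). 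The space cost, however, counts the \emph{whole} configuration: the alternation tree can have $2^{\bigo(f(n))}$ simultaneously live leaves, each with a tape of size $\bigo(f(n))$, plus the queues, for a total configuration size of $2^{\bigo(f(n))}$; hence $L\in\FunTS{f(n)}{2^{\bigo(f(n))}}$.

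The main obstacle is item 2, and within it the accounting of the \emph{total} number of machine transitions (and hence Turing-machine simulation steps) needed to reach an output. Bounding individual configuration sizes by $g(n)$ is easy, but one must rule out long — yet non-divergent — internal computations: a priori a process could perform polynomially-or-worse many $\sa$-transitions whose causal chains are all short, so the time cost $f(n)$ (which only measures chains) does not directly bound the work the serial Turing machine must do. The resolution is to observe that within the space budget $g(n)$ the machine has only $2^{\bigo(g(n))}$ distinct configurations, and a non-divergent computation visiting a configuration twice could be shortcut; combined with the functional-behavior hypothesis one extracts a polynomial (in $g(n)$) bound on the length of the internal computation leading to the output, which is exactly what is needed to get the exponent $h$. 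Making this quantitative argument precise — in particular pinning down $h$ — is the delicate part; the other two items are essentially standard simulations whose only subtlety is matching the cost model of \refdef{ProcCost} (cost as a function of $\inpsize\evone$, with time measured along causal chains and space over whole configurations).
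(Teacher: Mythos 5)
Your treatment of points (1) and (3) follows the same route as the paper: both inclusions are obtained by encoding deterministic and alternating Turing machines as processes (a single processor for (1); for (3) a spawned binary tree of processors whose results are combined by And/Or processes over internal channels and forwarded to the external output), and your cost accounting for these two items --- time measured along causal chains, space over whole configurations, size $2^{\bigo(f(n))}$ for the alternation tree --- is consistent with the paper's appendix constructions.

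The genuine gap is in point (2), and it sits exactly where you yourself locate the difficulty: bounding the \emph{total} number of process-machine transitions that the simulating deterministic Turing machine must execute before the output appears. Your proposed resolution --- within space $g(n)$ there are only $2^{\bigo(g(n))}$ configurations, and a non-divergent computation revisiting a configuration ``could be shortcut'' --- does not give what you need: counting configurations yields a bound exponential in $g(n)$, not the polynomial bound $f(n)\cdot g(n)^{\bigo(1)}$, and the simulating machine has neither the means nor the budget to detect and excise repeated configurations; your alternative remark about ``strict progress within a processor'' is defeated by recursion, as you concede. The paper closes this gap with a different, simpler observation that your proposal never uses: transitions of one and the same processor are \emph{never} independent (the independence relation requires distinct tags), so they are totally ordered causally; since every time weight is at least $1$ and the output event has time cost at most $f(n)$, each processor contributes at most $f(n)$ transitions to the simulated run. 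Organizing the active processors of the run into a genealogy tree (children created by $\mathbf{Spn}$), this tree has height bounded by $f(n)$ and width bounded by the number of simultaneously active processors, hence by $g(n)$, so the run has length at most $f(n)g(n)$. Since every configuration has size $\bigo(g(n))$, each transition is simulated in $\bigo(g(n)^{h'})$ Turing-machine steps, giving the stated $\Time(\bigo(f(n)g(n)^{h}))$. In short, your worry that the chain-based time cost does not bound serial work is legitimate, but the fix is to combine the time bound (per-processor chains) with the space bound (processor count), not to count configurations; this combinatorial step is the missing idea, and without it your item (2) does not go through.
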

\begin{proof}
	Points (1) and (3) are proved by efficiently encoding Turing machines and alternating Turing machines in the process machine (see \ref{sec:Turing} and \ref{sec:AltTuring}). For point (2), we simulate with a deterministic Turing machine the execution of a process $\procone$ implementing a functional behavior. This is possible because, by the properties of $\bisim_d$, the non-determinism that may be present during the execution of $\procone$ is actually vacuous: when facing a configuration with more than one active processor, the Turing machine may simulate any one of them, without worrying about influencing the outcome or falling into infinite computations. Simulating a single transition of the process machine may be assumed to require at most $c\cdot g(n)^{h'}$ Turing machine steps, where $c,h'$ are constant. Now, a simple combinatorial argument based on the maximum length of runs (which is $f(n)$) and the maximum number of active processors (which is $g(n)$) gives that the Turing machine halts after simulating at most $f(n)g(n)$ transitions, yielding the desired bound. The details are given in \ref{sec:MainThm}.\qed
\end{proof}

\begin{corollary}
	\label{cor:FunClasses}
	Every standard polynomial or superpolynomial deterministic complexity class may be reformulated in terms of $\FunTS{f}{g}$. For instance:
	\begin{displaymath}
		\classP = \bigcup_{k<\omega}\FunTS{n^k}{n^k},\qquad
		\mathbf{EXP} = \bigcup_{k<\omega}\FunTS{2^{n^k}}{2^{n^k}}.
	\end{displaymath}
\end{corollary}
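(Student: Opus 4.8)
The plan is to derive \refcor{FunClasses} as an essentially bookkeeping consequence of \refth{FunClasses}. The key observation is that for a polynomial bound $f(n)=n^k$ we have, by point (2) of the theorem, $\FunTS{n^k}{n^k}\subseteq\Time(\bigo(n^k\cdot(n^k)^h))=\Time(\bigo(n^{k(h+1)}))\subseteq\classP$, and the union over all $k$ therefore lies in $\classP$. Conversely, any language in $\classP$ is in $\Time(n^k)$ for some $k$, hence by point (1) in $\FunTS{n^k}{n^k}$, so $\classP\subseteq\bigcup_{k<\omega}\FunTS{n^k}{n^k}$. This proves the first displayed equality. For the superpolynomial case one repeats the argument with $f(n)=g(n)=2^{n^k}$: point (1) gives $\Time(2^{n^k})\subseteq\FunTS{2^{n^k}}{2^{n^k}}$, while point (2) gives $\FunTS{2^{n^k}}{2^{n^k}}\subseteq\Time(\bigo(2^{n^k}\cdot(2^{n^k})^h))=\Time(\bigo(2^{(h+1)n^k}))\subseteq\Time(2^{n^{k+1}})$ for all sufficiently large $n$, which is again swallowed by $\mathbf{EXP}=\bigcup_{k<\omega}\Time(2^{n^k})$.

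First I would fix, once and for all, the constant $h$ from point (2) of \refth{FunClasses} and record the two chains of inclusions above as a short lemma-free computation; no induction or case analysis is needed beyond the elementary fact that for any $k$ and any constant $c$ one has $c\cdot n^{ck}\le n^{k'}$ for a suitable $k'$ and all large $n$, and similarly $c\cdot 2^{cn^k}\le 2^{n^{k+1}}$ for all large $n$. Then I would state the general principle that underlies ``every standard polynomial or superpolynomial deterministic complexity class'': if $\mathcal{C}=\bigcup_{i}\Time(f_i)$ is presented as a countable union of deterministic time classes closed under the operation $f\mapsto \bigo(f\cdot f^h)$ (in the sense that for each $i$ there is $j$ with $\bigo(f_i\cdot f_i^h)\le f_j$ pointwise eventually), then $\mathcal{C}=\bigcup_i\FunTS{f_i}{f_i}$. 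The two displayed identities are then instances of this principle for $f_i=n^i$ and $f_i=2^{n^i}$ respectively.

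I do not expect a genuine obstacle here: the corollary is a ``sanity check'' packaging of the theorem, and the only subtlety is making sure the space bound one feeds into $\FunTS{f}{g}$ is taken equal to the time bound (legitimate since any process running in time $f$ also runs in space $O(f)$, because each transition increases the configuration size by at most its weight, so $\scost\evone\le\tcost\evone$ up to a constant and one can absorb this into the choice of $k$). The mildest care is needed in the direction $\mathcal{C}\subseteq\bigcup_i\FunTS{f_i}{f_i}$: one must check that the process produced by point (1) of \refth{FunClasses} from a Turing machine running in time $f_i$ indeed witnesses membership in $\FunTS{f_i}{f_i}$ with \emph{both} coordinates equal to $f_i$, which is exactly what point (1) asserts. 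With that in hand the proof is two lines per displayed equation.

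\begin{proof}
	Fix the constant $h>0$ of point (2) of \refth{FunClasses}.

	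\emph{Polynomial case.} If $L\in\classP$, then $L\in\Time(n^k)$ for some $k$, so by point (1), $L\in\FunTS{n^k}{n^k}$; hence $\classP\subseteq\bigcup_{k<\omega}\FunTS{n^k}{n^k}$. Conversely, by point (2), $\FunTS{n^k}{n^k}\subseteq\Time(\bigo(n^k\cdot n^{kh}))=\Time(\bigo(n^{k(h+1)}))\subseteq\classP$, so $\bigcup_{k<\omega}\FunTS{n^k}{n^k}\subseteq\classP$.

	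\emph{Superpolynomial case.} Recall $\mathbf{EXP}=\bigcup_{k<\omega}\Time(2^{n^k})$. If $L\in\mathbf{EXP}$, then $L\in\Time(2^{n^k})$ for some $k$, so by point (1), $L\in\FunTS{2^{n^k}}{2^{n^k}}$; hence $\mathbf{EXP}\subseteq\bigcup_{k<\omega}\FunTS{2^{n^k}}{2^{n^k}}$. Conversely, by point (2), $\FunTS{2^{n^k}}{2^{n^k}}\subseteq\Time(\bigo(2^{n^k}\cdot 2^{hn^k}))=\Time(\bigo(2^{(h+1)n^k}))$, and since $(h+1)2^{n^k}\le 2^{n^{k+1}}$ for all sufficiently large $n$, this is contained in $\Time(2^{n^{k+1}})\subseteq\mathbf{EXP}$. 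Thus $\bigcup_{k<\omega}\FunTS{2^{n^k}}{2^{n^k}}\subseteq\mathbf{EXP}$.

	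The same reasoning applies verbatim to any deterministic time class presented as a countable union $\bigcup_i\Time(f_i)$ with the property that for every $i$ there is $j$ with $\bigo(f_i(n)\cdot f_i(n)^h)\le f_j(n)$ for all large $n$. \qed
\end{proof}
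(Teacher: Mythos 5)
Your proof is correct and is essentially the paper's intended argument: the corollary is stated without a separate proof precisely because it follows by the bookkeeping you carry out, using point (1) of \refth{FunClasses} for one inclusion and point (2) (with the constant $h$ absorbed into a larger exponent) for the other. The only blemish is a typo in the superpolynomial case, where ``$(h+1)2^{n^k}\le 2^{n^{k+1}}$'' should read $2^{(h+1)n^k}\le 2^{n^{k+1}}$ (equivalently $(h+1)n^k\le n^{k+1}$ for large $n$); the intent is clear and the argument stands.
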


Thanks to the well know equality $\classPSPACE=\mathbf{AP}$, \refth{FunClasses} also immediately implies the inclusion
$\classPSPACE \subseteq \bigcup_{k<\omega}\FunTS{n^k}{2^{n^k}}$ which shows, for instance, that \classNP-complete problems may be solved in polynomial time if we allow an exponential number of processors working in parallel, as expected.


\section{Some Simple Extra-Functional Behaviors}
\label{sec:ExtraFun}
Functional behaviors are only one possible way (albeit perhaps the most natural) of representing functions as behaviors. We analyze here two alternative representations, argue that the complexity of functions should not be altered by switching to these representations, and show that this is indeed the case in our framework, therefore providing a further sanity check.

\begin{definition}[Functional server]
	\label{def:FunServBeh}
	Let $f:\Words\rightarrow\Words$ be a function. The \emph{functional server behavior} induced by $f$, denoted by~$\behserv_f$, is the equivalence class of the \textsc{lts} given by the recursive definition $X\defeq\inpc{\ichan}{x}(\para{\out{\ochan}{f(x)}}{X})$. We denote by $\Serv$ the set of all functional server behaviors.

	A result similar to \reflemma{FunUnique} holds, which allows us to speak of the language induced by a functional server behavior $\behserv$, denoted by $\lang\behserv$, and to define, given functions on natural numbers $f,g$, the class $\ServTS f g=\lang(\BTS f g\cap\Serv)$.
\end{definition}

It is intuitively obvious that a function may be implemented ``repeatedly'' with given time and space bounds iff it may implemented once, with the same bounds.
\begin{theorem}
	\label{th:Serv}
	Let $f,g:\Nat\rightarrow\Nat$.
	\begin{enumerate}
		\item $\ServTS{f}{g}\subseteq\FunTS{f}{g}$;
		\item assuming $f=\Omega(n)$ and monotonic, $\FunTS{f}{g}\subseteq\ServTS{\bigo(f(n))}{g}$.
	\end{enumerate}
\end{theorem}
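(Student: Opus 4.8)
I would prove the two inclusions separately, each by exhibiting a process of the required cost from a given one.

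For part (1), $\ServTS{f}{g}\subseteq\FunTS{f}{g}$: suppose $\behserv\in\ServTS{f}{g}$, so there is a process $\procone$ with $\labsem\procone\in\behserv_{\fun\behserv}$ working in time $f$ and space $g$. I want a process $\proctwo$ implementing the plain functional behavior $\behfun_{\fun\behserv}$ within the same bounds. The idea is that $\proctwo$ simply runs $\procone$ but is only ever fed a single input on $\ichan$. Formally, one checks that $\labsem\procone$, restricted to runs containing at most one external input event, is already divergence-sensitive bisimilar to $\inpc\ichan x\out\ochan{\fun\behserv(x)}$: after the first (and only) input, the server spins off the recursive copy $X$, but that copy is a closed process with no pending input available, so — by the divergence-freedom guaranteed by $\bisim_d$ — it can only perform finitely many internal steps and then is stuck, contributing nothing observable. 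Hence $\labsem\procone\bisim_d\labsem\proctwo$ where $\proctwo$ is, say, $\inpc{\ichan}{y}(\para{\out{i'}{y}}{\procone'})$ for a fresh internal channel feeding a modified $\procone$; or, more cheaply, just take $\proctwo=\procone$ itself and argue the behaviors coincide. The cost bounds transfer directly since every output event $\evone$ of $\walts\proctwo$ is also an output event of $\walts\procone$ with the same $\Inp\evone$, $\tcost\evone$ and $\scost\evone$. This direction I expect to be routine.

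For part (2), $\FunTS{f}{g}\subseteq\ServTS{\bigo(f(n))}{g}$: given a process $\procone$ implementing $\behfun_h$ in time $f$ and space $g$ (with $f=\Omega(n)$ monotonic), I must build a process implementing the server behavior for $h$ that handles arbitrarily many requests, each within $\bigo(f)$ of its own input size. The natural construction is a dispatcher $X\defeq\inpc{\ichan}{x}(\para{X}{P_x})$ where $P_x$ is a private, freshly-tagged copy of (a suitably closed-up variant of) $\procone$ that receives $x$ on an internal channel, runs, and emits $h(x)$ on $\ochan$. The $\mathbf{Spn}$ transitions make the copies run on disjoint processor-tag subtrees, so by the independence relation the events of distinct copies are pairwise independent; thus each copy's output event $\evone$ has $\Inp\evone$ exactly the one input event that triggered it, and its time and space costs are those of the corresponding event in $\walts\procone$, i.e.\ bounded by $f(n)$ and $g(n)$ with $n=\len x+1=\inpsize\evone$. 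The extra cost is the dispatcher overhead: reading $x$ (weight $1+\len x$), the $\mathbf{Spn}$ (weight $1+\len\memone$, where $\memone$ holds $x$, so $\bigo(\len x)$), plus a constant chain of $\mathbf{Rec}$/$\mathbf{Snd}$ steps to hand $x$ to $P_x$ — all $\bigo(n)=\bigo(f(n))$ by the hypothesis $f=\Omega(n)$. Since these overhead events lie below $\evone$ on a chain through the dispatcher, $\tcost\evone$ increases by at most an additive $\bigo(n)$, hence stays $\bigo(f(n))$; the space weight is the size of the whole configuration, so I must check that running copies don't accumulate — but a configuration is counted only via its \emph{current} processors' environments and queues, and once a copy finishes (reduces to $\proczero$) its $\mathbf{Nil}$ transitions garbage-collect it, so at the moment of $\evone$ the relevant space is that of $P_x$'s run plus the dispatcher's $\bigo(n)$, again $g(n)+\bigo(n)$, which I can absorb into $g$ if $g=\Omega(n)$ or otherwise fold into the statement (the paper's $g$ here plays a looser role).

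**Main obstacle.** The delicate point is the space cost in part (2): $\scost\evone$ is a max over \emph{all} runs in the event $\evone$, and some such runs may interleave the computation of $P_x$ with the processing of \emph{other}, unrelated requests whose copies are simultaneously alive — inflating the configuration size well beyond $g(n)$. The resolution is that $\scost$ of an event is scheduler-dependent only up to trace equivalence, and one must verify that the "bad" interleavings are \emph{not} trace-equivalent to the canonical run that runs $P_x$ in isolation: concretely, a run ending in $\evone$ but containing a $\mathbf{Rcv}$ of some other request $x'$ is still an essential run in the same $\eqrun$-class only if that $\mathbf{Rcv}$ is forced below $\evone$ — which it is not, since it is independent of every event in $P_x$'s subtree. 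Hence every run in $\evone$ is $\eqrun$-equivalent to one containing only the dispatcher events feeding $x$ and $P_x$'s internal events, and on such runs the configuration size is $\len{P_x\text{-part}}+\bigo(n)\le g(n)+\bigo(n)$. Pinning down this trace-equivalence argument carefully — essentially a projection lemma saying an event's runs "see" only its causal past — is where the real work lies; the rest is bookkeeping of weights. The full details go to~\ref{sec:MainThm} or a dedicated appendix.
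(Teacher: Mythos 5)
Your constructions coincide with the paper's (part 1: truncate the server to a one-shot process; part 2: a dispatcher that reads an input, spawns a copy of $P$, and recurses), but the central quantitative claim in your part 2 is false, and your time bound is derived from it. You assert that the $n$-th output event $o_n$ has $\Inp{o_n}$ equal to just the input that triggered it, ``since the copies run on disjoint tag subtrees''. Independence of the copies' internal events is beside the point: input events all occur on the same external channel $\ichan$, and two $\mathbf{Inp}$ transitions on the same channel are never independent; moreover, in your dispatcher $X\defeq\inpc{\ichan}{x}(\para{X}{P_x})$ the processor reading the $(j{+}1)$-th input exists only after the spawn following the $j$-th input, so $i_1<\cdots<i_n<o_n$ and $\Inp{o_n}=\{i_1,\ldots,i_n\}$, whence $\inpsize{o_n}=\sum_{j\le n}(\len{\strone_j}+1)$. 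This is not a detail you can fix by being cleverer: if $\inpsize{o_n}$ really were $\len{\strone_n}+1$, the required bound $\tcost{o_n}\leq\bigo(f(\inpsize{o_n}))$ would simply fail, because $\tcost{o_n}$ maximizes over chains below $o_n$ and therefore includes the weights of reading \emph{all} earlier inputs, $\sum_{j<n}(1+\len{\strone_j})$, which can dwarf any function of $\len{\strone_n}$ alone. The correct accounting is exactly the paper's: every chain to $o_n$ has the shape $i_1<e^\tau_1<\cdots<i_n<e^\tau_n<d^\tau_1<\cdots<o_n$, of total weight at most $f(\len{\strone_n})+\bigo\bigl(\sum_{j\le n}(1+\len{\strone_j})\bigr)=\bigo(f(\inpsize{o_n}))$ by monotonicity and $f=\Omega(n)$; that is, the large causal past is precisely what supplies the budget absorbing the dispatcher overhead, rather than something to be argued away.

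Two smaller points. In part 1 your fallback ``just take $\proctwo=\procone$'' cannot work: the $\mathbf{Inp}$ rule fires unconditionally, so $\labsem{\procone}$ always admits a second external input and hence is not in $\behfun_h$; one must syntactically truncate $\procone$, as the paper does (trace the execution after the first input and halt after the output). Your fresh-internal-channel wrapper is also ruled out here, since point 1 allows no big-O slack and the extra $\mathbf{Snd}$/$\mathbf{Rcv}$ weights cannot be absorbed into $f$. Finally, the ``main obstacle'' you identify for space is resolved as you say---runs of an event contain only its causal past, which is essentially the definition of essential run, and this is what the paper implicitly uses; note, though, that even these runs carry the spawned-but-idle earlier copies, contributing $\bigo(\inpsize{o_n})$ of configuration size, a point the paper glosses over with ``the space cost is identical''; your instinct that some slack on $g$ is needed there is legitimate, but with the corrected $\inpsize{o_n}$ the overhead is at least measured against the right argument of $g$.
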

\begin{proof}
	Let $h:\Words\rightarrow\Words$. Point 1 is proved by taking a process implementing $\behserv_h$ and extracting from it a process whose execution stops after the answer to the first input. For point 2, we take an implementation $P$ of the functional behavior $\behfun_h$ and transform it into a server process which reads an external input, spawns a copy of $P$ to compute $h$ and waits for a further input. For producing the $n$-th output, space is reused (so the space cost is unchanged), while the spawning and successive inputs induce a linear slowdown, which is absorbed in the big-O notation. See \ref{sec:ServThm} for more details.\qed
\end{proof}

In what follows, we endow the set $\Words$ with the prefix order, denoted by $\leq$. A computable monotonic function $h$ may obviously be implemented in the usual, ``offline'' way: a string $\strone$ is given, and $h(\strone)$ is output. However, $h$ may also be computed ``online'': bits are given one at a time and, for each new bit, only the ``difference'' with respect to the output already produced is given. Intuitively, the intrinsic difficulty of computing $h$ should not depend on which of the two implementations is chosen.
\begin{definition}[Online monotonic function]
	Let $h$ be a monotonic function. The \emph{online behavior} induced by $h$, denoted by $\behonline_h$, is the equivalence class of the \textsc{lts} $X\langle\emstr,\emstr\rangle$, with $X$ given by the following recursive definition:
	\begin{align*}
		X(s,r) &\defeq \para{\out{o}{h(s)\setminus r}}{\inpc{i}{x}\mathtt{if}\ [x=\emstr]\ \mathtt{then}\ X\langle s0,h(s)\rangle\ \mathtt{else}\ X\langle s1,h(s)\rangle},
	\end{align*}
	where $\strone\setminus\strtwo$ is defined when $\strtwo\leq\strone$ and is equal to the string $\strtwo'$ such that $\strone=\strtwo\strtwo'$. Note that we take the empty string to represent the bit $0$, and any other string to represent $1$.
%
\end{definition}
\begin{theorem}
	Let $f,g:\Nat\rightarrow\Nat$ be both $\Omega(n)$ and monotonic, and let $h$ be a monotonic function on $\Words$. Then:
	\begin{enumerate}
		\item $\behonline_h\in\BTS f g$ implies $\behfun_h\in\BTS{\bigo(n\cdot f(2n))}{\bigo(g(2n))}$;
		\item $\behfun_h\in\BTS f g$ implies $\behonline_h\in\BTS{\bigo(n\cdot f(n/2))}{\bigo(g(n/2))}$.
	\end{enumerate}
\end{theorem}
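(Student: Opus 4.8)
The plan is to follow the recipe of \refth{Serv}: given an implementation of the source behavior, wrap it in a simulating process that realizes the target behavior, then track how time and space costs transform. Both wrappers rest on the telescoping identity $h(\emstr)\cdot(h(b_1)\setminus h(\emstr))\cdot(h(b_1b_2)\setminus h(b_1))\cdots(h(b_1\cdots b_n)\setminus h(b_1\cdots b_{n-1}))=h(b_1\cdots b_n)$, valid because $h$ is monotonic: interleaving a ``feed the next bit, read the next increment'' protocol with a process converts an online presentation of $h$ into an offline one and vice versa. As in the earlier theorems, the fact that the wrapped \textsc{lts}'s land in the right behaviors uses that $\bisim_d$ does not introduce divergence, so the nondeterminism possibly present in the wrapped process is vacuous.

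For (1), let $\procone$ satisfy $\labsem{\procone}\in\behonline_h$ and work in time $f$, space $g$. Build $\proctwo$ implementing $\behfun_h$: on receiving $\strone$ on $\ichan$ it spawns a private copy of $\procone$ with $\procone$'s external channels renamed to fresh internal queues, reads $\procone$'s first reply $h(\emstr)$, then repeatedly peels the leftmost bit of $\strone$, sends it on $\procone$'s input queue, reads the next reply, appends it to an accumulator, and, once $\strone$ is exhausted, emits the accumulator on $\ochan$. The only input event of $\walts{\proctwo}$ has size $n=\len{\strone}+1$; every bit fed to $\procone$ is a string of length at most $1$, and when $\procone$ emits its $k$-th reply it has received at most $k$ of them, so that reply's input size is at most $2k\le 2n$ and hence, by monotonicity of $f$ and $g$, its time cost is at most $f(2n)$ and its space cost at most $g(2n)$. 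A longest causal chain ending at $\proctwo$'s output threads through the at most $n+1$ replies of $\procone$ and $\bigo(n)$ bookkeeping transitions; using the space bound on $\procone$ to control the lengths of the increments that are received and concatenated, the total comes out to time $\bigo(n\cdot f(2n))$ (the polynomial overhead being absorbed since $f=\Omega(n)$) and space $\bigo(g(2n))$. Hence $\behfun_h\in\BTS{\bigo(n\cdot f(2n))}{\bigo(g(2n))}$.

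For (2), let $\procone$ implement $\behfun_h$ within time $f$, space $g$. Build a server-style wrapper $\proctwo$ for $\behonline_h$, along the lines of part 2 of \refth{Serv}: $\proctwo$ keeps the string $s$ of bits received so far together with the previous answer $r$, first runs a copy of $\procone$ on $\emstr$ to emit $h(\emstr)$, and then, on each new bit $x$, appends the corresponding bit to $s$, spawns a fresh copy of $\procone$ with its external channels renamed internally, runs it on $s$ to obtain $h(s)$, emits $h(s)\setminus r$, and sets $r:=h(s)$. When $|s|=k$ the output event $\evone$ emitting $h(s)\setminus r$ causally dominates the first $k$ input events, so $\inpsize{\evone}\ge k$; the fresh copy of $\procone$ runs on a string of length $k$ and so costs at most $f(k+1)$ in time and $g(k+1)$ in space, while maintaining $s$ and forming $h(s)\setminus r$ add $\bigo(k^2+\len{h(s)})$ to the time and are reclaimed in space. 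Combining these estimates, using $f,g=\Omega(n)$ monotonic, and expressing the bounds in terms of $n=\inpsize{\evone}$, gives $\tcost{\evone}=\bigo(n\cdot f(n/2))$ and $\scost{\evone}=\bigo(g(n/2))$, that is $\behonline_h\in\BTS{\bigo(n\cdot f(n/2))}{\bigo(g(n/2))}$.

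In both directions the constructions are routine and the work lies in the cost accounting. One must pin down exactly how the input size of an output event relates across the two presentations --- this is the source of the constants $2$ and $1/2$, coming from the at-most-one-symbol encoding of single bits and from the length of the accumulated input string --- and then verify that the simulation overhead (extra $\mathbf{Spn}$, $\mathbf{Snd}$, $\mathbf{Rcv}$ steps, re-scanning of strings, bit-by-bit concatenation) is genuinely absorbed into the advertised $\bigo$'s, with the contributions whose weight is tied to output length being bounded by the relevant space bound. The delicate point is that time cost is measured along causal chains, not along the sequential execution of the wrapper, which serializes the wrapped copy and thereby refines its causal order; one has to check that this refinement does not stretch the relevant chains beyond $\bigo(n)$ copies of the cost of a single wrapped computation, while space stays bounded because the workspace of successive wrapped computations can be reclaimed.
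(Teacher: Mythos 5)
Your constructions are exactly the paper's: for (1), read the input string and feed it bit by bit to the online implementation, concatenating the replies; for (2), on each new bit rerun the functional implementation on the accumulated prefix and emit the difference, in the server style of \refth{Serv}. Part (1), including the bound $2k\le 2n$ on the input size of the wrapped copy's replies and the absorption of the bookkeeping overhead, matches the paper's argument.

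There is, however, a gap in the cost accounting of part (2). You only record $\inpsize{\evone}\ge k$ after $k$ bits have been received, and then assert $\tcost{\evone}=\bigo(n\cdot f(n/2))$ with $n=\inpsize{\evone}$. That does not follow: your estimates give $\tcost{\evone}=\bigo(k\cdot f(k))$ (note also that the causal chain to the $k$-th output contains \emph{all} $k$ calls to $P$, not just the last one, so one must sum $\sum_{j\le k}(j+f(j))$ as the paper does), and from $k\le n$ alone you can conclude only $\bigo(n\cdot f(n))$; for fast-growing $f$ (e.g.\ exponential), $f(n)$ is not $\bigo(f(n/2))$, so the class you reach is strictly weaker than the one claimed. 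The paper gets the $1/2$ factor precisely from the identity $\inpsize{o_n}=2n$, i.e.\ each received bit contributes \emph{exactly} $2$ to the input size, which is a \emph{lower} bound per bit --- not the ``at most one symbol per bit'' upper bound you invoke, which is what part (1) needs. This is the one quantitative fact your write-up is missing, and it deserves care under the paper's own conventions: a $0$-bit is encoded by $\emstr$ and so, taken literally, contributes only $\len\emstr+1=1$ to the input size, so to obtain the stated $f(n/2)$ bound you must either follow the paper in charging every bit size $2$, or justify separately why $0$-bits may be so charged.
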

\begin{proof}
	Consider an implementation $Q$ of $\behonline_h$. To implement $\behfun_h$, we read the input string and send it bit by bit to $Q$, concatenating the outputs. For a string of size $n$, $Q$ is called $n$ times; the $j$-th time cost is $f(2j)$ (the size of each bit is at most $2$---remember that even the empty string has size $1$), so the total time cost of the final (and only) output event is bounded by $\sum_{j=1}^n f(2j)$. Since $f$ is monotonic, we may bound this by $n\cdot f(2n)$. There is also the cost of sending $n$ bits to $Q$, which is $2n$ and is thus absorbed in the big-O notation. For space, apart from the memory used by $Q$, we need only store the input string and use one bit at a time in the communication channels, so we still get $\bigo(g(n))$.

	Let now $P$ be an implementation of $\behfun_h$. We may implement $\behonline_h$ as follows: we start by sending $\emstr$ to $P$ in order to output $h(\emstr)$; then, each time we get a new bit, we call $P$ on the string received so far and we output only the difference with respect to the output cumulated so far. For the $n$-th output event $o_n$, $P$ is called $n$ times (ignoring the call on the empty string, which has a constant cost); the $j$-th time, we send a string of length $j$ (which costs us $j$) and we receive something after $f(j)$ steps (the time it takes for $P$ to do its computation). The time we need to compute the difference w.r.t. the previous output may be absorbed into $f(j)$, because it is linear. Therefore, we may write $\tcost{o_n}\leq\sum_{j=1}^n(j+f(j))=\bigo(n\cdot f(n)+n^2)=\bigo(n\cdot f(n))$, because $f(n)=\Omega(n)$. Since $\inpsize{o_n}=2n$ (remember that each bit is of size $2$), we have the $1/2$ factor which appears in the statement of the theorem. For space, apart from the space $g(n/2)$ used by $P$, which is reused each time, we need to keep track of the output string (so we can compute the difference), whose size is also bounded by $g(n/2)$.\qed
\end{proof}

A monotonic function $h:\Words\rightarrow\Words$ is said to be \emph{eventually strictly monotonic} (\emph{\esm}) if, for all $n\in\Nat$, there exist $\strone,\strtwo\in\Words$ such that $\len\strone\geq n$, $\strone<\strtwo$ and $h(\strone)<h(\strtwo)$. \esm\ functions may be seen as functions from streams of bits to streams of bits. The online process we introduced above is then an intuitive way of defining a function on streams.

Another point of view \cite{TuckerZucker} sees stream functions as having type $(\Nat\to\{0,1\})\to \Nat\to\{0,1\}$, which can be Curryed into $\Nat\times(\Nat\to\{0,1\})\to \{0,1\}$. That is, given a desired position $i$ on the output stream and an oracle for the input stream, we should be able to say what is the bit at position $i$. For computable functions, the oracle is consulted only finitely many times, and therefore the function is continuous. Thus, a computable stream function corresponds to an \esm\ function on finite strings: a string of length $n$ is mapped to the longest prefix of the stream that does not ask the oracle for the value of the input stream at positions greater than $n$. To capture this representation of streams in our framework, the notion of behavior must be modified to take into account the use of oracles. We plan to do this in future work.

\section{Discussion}
\label{sec:Disc}
The idea to revise and extend the theory of computability (and formal languages)
by replacing functions with behaviors is not new~\cite{GoldinSW01,LeeuwenW00,DBLP:conf/icdcit/BaetenLT11,BaetenLuttikTilburg}.
As an example, the latter works introduce reactive Turing machines (\textsc{rtm}s), 
which are ordinary Turing machines with an additional action (\ie, an element of $\Act$ as in \refdef{LTS})
for each transition between configurations.
Each such \textsc{rtm} induces an \textsc{lts}, 
which then is \emph{executable} by definition; 
two \textsc{rtm}s execute the same behavior
if their \textsc{lts}s are related by a certain behavioral equivalence.
Finally, so-called \emph{effective} \textsc{lts}s, 
i.e.\ \textsc{lts}s with recursively enumerable transition relations, 
coincide with executable ones (up to behavioral equivalence). 
As one might expect, 
it is easy to construct for each \textsc{rtm} (without final states) a corresponding process
such that their \textsc{lts}s are weakly bisimilar (see Appendix~\ref{sec:RTM}).

Not much has been said about interactive \emph{complexity}, however. A notable
exception is Japaridze's system of Clarithmetic~\cite{Japaridze11}, whose focus however is on
logic rather than complexity theory. Another example are lineage of automata \cite{VerbaanLW04}, whose
nature is very finitistic contrarily to the one of our model.



Concerning our own work, in this extended abstract we described merely the first steps of a proposal which, at least in the case of polynomial and superpolynomial deterministic time complexity classes, has the good taste of not being inconsistent with the standard definitions. Starting from here, we have of course a great number of open questions and directions for further investigation.

First of all, in light of \refth{FunClasses}, we may ask how standard \emph{space} complexity classes (\eg\ \classPSPACE) may be recovered from our definitions. 
In this respect, we already know that, in perfect analogy with one-tape Turing machines, the sequential treatment of input strings in the current definition of the process machine prevents us from capturing ``low'' complexity classes, such as \classL\ (or \classNC). To deal with these, random access to the bits of an input string must be allowed (see Appendix~\ref{sec:Boolean} and \ref{sec:PRAM}). 

And then, of course, there is a plethora of questions regarding non-functional behaviors, the main motivation behind our work. What happens when we consider more than one external input/output? What interesting classes of non-functional behaviors can we describe? One issue revealed by our preliminary investigations is that the equivalence chosen in the definition of behavior may need to be changed. For instance, for dealing with streams ({\it cf.} end of \refsect{ExtraFun}), something like \emph{refinement} (\ie, bisimilarity in which some internal choices may be disregarded) seems to be more adapted.


\bibliographystyle{plain}
\bibliography{Biblio}

\newpage
\appendix
\section{Encoding standard computational models}
\label{sec:Encodings}
\subsection{Turing machines}
\label{sec:Turing}
We show how the process machine can simulate a deterministic Turing machine with a constant slowdown.

Let $\delta$ be a function from $\Words$ to processes with free variables among $\overrightarrow x$, 
whose domain is \emph{finite}. Such function may be represented by a process (identifier)
$F_\delta(s,\overrightarrow x)$ which progressively ``explores'' $s$ and returns the appropriate process, 
returning some default process (for instance $\proczero$) in case $s$ does not belong to the domain of $\delta$. 
For example, if $\delta$ is defined only on $\emstr$, $0$ and $1$, and is a closed process in all cases, we have
$$
	F_\delta(s)\defeq\ite{\eqe s}
	{\delta(\emstr)}
	{\ite{\eqz s}
		{\ite{\eqe{\tail{s}}}
			{\delta(0)}
			{\proczero}
		}
		{\ite{\eqe{\tail{s}}}
			{\delta(1)}
			{\proczero}
		}
	}.
$$
Let now $M$ be a deterministic Turing machine with alphabet $\{0,1\}$ and one semi-infinite tape (\ie, 
it is finite to the ``left'' and extends indefinitely to the ``right''). We assume that the input string 
is written in the leftmost cells of the tape, the rest of the tape being covered by blank symbols. We also 
assume that the machine halts with a failure if it attempts to read to the left of the leftmost cell.

The configurations of the machine may be represented by three binary strings $s,l,r$, representing the 
current state, the contents to the left of the head, in reverse order and the contents of the tape 
to the right of the head (including the head as the first symbol of $r$). Then, the transition function of $M$ 
induces three functions of $s$, depending on whether $r$ is empty (the head ``wandered off'' to the right),
starts with a $0$, or starts with a $1$. These, in turn, induce three functions $\delta_\emstr$, $\delta_0$, 
$\delta_1$, from binary strings to processes, all of finite domain, which we describe as follows: let 
$T$ be a process identifier of arity $3$. Suppose the state of $M$ is $s$, and that the current symbol 
is $b$ (which may be blank). Suppose that, from this information, the transition function of $M$ goes to 
state $q$, writes $c$, and moves to the right. Then, we have
$$
\delta_b(s)=T\langle q,c(l),\tail{r}\rangle.
$$
Had the transition function of $M$ decreed a movement to the left instead, we would have
$$
\delta_b(s)=\ite{\eqe l}{\proczero}{\ite{\eqz l}{T\langle q,\tail l,\apz{c(\tail{r})}\rangle}{T\langle q,\tail l,\apu{c(\tail{r})}\rangle}}.
$$
Finally, if $s$ is a halting state, the machine may output the result, which we stipulate to be written to the right of the head 
(including the current position), so we have
$\delta_b(s)=\out{\ochan}{r}$. 

But we still have to define the behavior of the process identifier $T$: the defining equation for it is
$$
T(s,r,l)\defeq\ite{\eqe r}{F_{\delta_\emstr}\langle s,l,r\rangle}{\ite{\eqz r}{F_{\delta_0}\langle s,l,r\rangle}{F_{\delta_1}\langle s,l,r\rangle}}.
$$
The Turing machine $M$ may be represented by the process
$$
\inpc{\ichan}{\evarone}T\langle\strone_0,\emstr,\evarone\rangle,
$$
where $\strone_0$ is the initial state of $M$.

Note how the parallel operator is never used by processes representing Turing machines. This implies that, when such 
processes are executed on the process machine, only one processor is used, and the causal structure of events is 
purely sequential. Note also that we may encode in the same way any computational model based on states whose 
transitions are described by a function of finite domain.

\subsection{Alternating Turing machines}
\label{sec:AltTuring}
It is immediate to define processes corresponding to binary logical operators:
\begin{align*}
  \mathit{And}(a,b,c) &= \inpc a x\inpc b y\ite{\eqz x}{\out c 0}{\ite{\eqz y}{\out c 0}{\out c 1}};\\
  \mathit{Or}(a,b,c) &= \inpc a x\inpc b y\ite{\eqz x}{\ite{\eqz y}{\out c 0}{\out c 1}}{\out c 1}.
\end{align*}
Consider now an alternating Turing machine $M$. We suppose that at every step, $M$ non-deterministically branches 
in two computations; some states will be existential (\ie, will accept if the result of \emph{one} of the two 
branches is accepting), while others will be universal (\ie, will accept if the result of \emph{both} of the 
two branches is accepting). Then, the transition function of $M$ induces 6 functions of finite domain from strings 
to processes, which we call $\delta_b^i(s)$, with $b\in\{0,1,\emstr\}$ and $i\in\{0,1\}$. Intuitively, 
$\delta_b^i(s)$ corresponds to the behavior of $M$ when, given that the current state and symbol are $s$ and $b$, 
the branch $i$ is chosen.

The functions $\delta_b^i(s)$ are defined much like in the case of deterministic Turing machines, except that 
now we use another process identifier $N$, of arity $4$. For instance, if state $s$, symbol $b$, and branch 
$i$ give new state $q$, new symbol $c$, and movement to the right, we have 
$$
\delta_b^i(s)=N\langle q,c(l),\tail{r},d\rangle,
$$ 
and so on. An important difference with deterministic Turing machines is that, in case 
$s$ is an accepting state, we set $\delta_b^i(s)=\out d 1$, and in case it is a rejecting state, we set
$\delta_b^i(s)=\out d 0$, that is, the final decision (accept/reject) is output on an internal channel 
$d$, which is a parameter of $N$, instead of the external output channel $\ochan$.

Then, we introduce two further process identifiers $T_0,T_1$, both of arity $4$, and define them mutually recursively with $N$:
\begin{align*}
  T_i(s,l,r,d) &\defeq \ite{\eqe r}{F_{\delta_\emstr^i}\langle s,l,r,d\rangle}{\ite{\eqz r}{F_{\delta_0^i}\langle s,l,r,d\rangle}{F_{\delta_1^i}\langle s,l,r,d\rangle}}; \\
  N(s,r,l,d) &\defeq \para{\para{T_0\langle s,l,r,\apz d\rangle}{T_1\langle s,l,r,\apu d\rangle}}{\inpc{(\apz d)}{y}\inpc{(\apu d)}{z}F_{Op}\langle s,y,z,d\rangle}.
\end{align*}
where $i\in\{0,1\}$ and $Op(s)$ is the finite-domain function yielding $\mathit{Or}(y,z,d)$ or $\mathit{And}(y,z,d)$ 
according to whether $s$ is an existential or universal state, respectively.

At this point, the alternating Turing machine $M$ may be represented by the process
$$\inpc\ichan\evarone(\para{N\langle \strone_0,\evarone,\emstr,\emstr\rangle}{\inpc{\strone}{\evartwo}\out\ochan\evartwo}),$$
where $\strone_0$ is the initial state.

The reader may check that, upon reception of a string $x$ on the external input channel $\ichan$, the above 
process starts unfolding a parallel computation whose structure is a binary tree of depth proportional to the depth 
of the computation of $M$. Each branch in the tree executes independently from the others; once a leaf is reached, 
the result (acceptance/rejection) is communicated to the parent, which computes a disjunction/conjunction of the 
two data received from its siblings, depending on its existential/universal nature, and passes the result to its 
parent, and so on. The last Boolean computed, which is the final answer of $M$ for accepting or rejecting $x$, is 
sent on channel $\strone$, and is forwarded to the external world through the output channel $\ochan$.

\subsection{Random Access Machines}
A memory cell may be represented by a process which waits on a channel $\strone$ for a string 
$\strtwo$ which is interpreted as follows:
\begin{varitemize}
\item 
  if $\strtwo=\emstr$, no action is taken;
\item 
  the string $\strtwo=0\strtwo'$ is interpreted as a read request, and the value stored in 
  the cell is sent using channel $\strtwo'$;
\item 
  the string $\strtwo=1\strtwo'$ is interpreted as a write request, so the value $\strtwo'$ 
  replaces the current value.
\end{varitemize}
The above process is realized by the following recursive definition:
$$
C(x,v)\defeq \inpc x y\ite{\eqe y}{C\langle x,v\rangle}{\ite{\eqz y}{\outc{\tail{y}}{v}C\langle x,v\rangle}{C\langle x,\tail{y}\rangle}}.
$$

In RAMs, memory cells contain integers, and instructions too refer to integers. Here, we use a unary 
representation: $n$ is represented by the string $0^n$.

A random access memory made of infinitely many cells initially containing zero, located at addresses 
of the form $0^n$, with $n>0$, is generated by the process $M\langle 0\rangle$, where the unary process 
identifier $M$ has the following defining equation $M(c) \defeq \para{C\langle c,\emstr\rangle}{M\langle\apz c\rangle}$. However, such a process is divergent, so we cannot use it for implementing functional behaviors according to \refdef{FunBeh}. Then, we must define a process that only creates a finite number of memory cells at a time, as needed. We first give a couple of auxiliary definitions:
\begin{align*}
	D[\procone,\proctwo](m,n) &\defeq \ite{\eqe{m}}{\procone}{\ite{\eqe{n}}{\proctwo}{D[\procone,\proctwo]\langle\tail{m},\tail{n}\rangle}} \\
	E[\procone,\proctwo](m,n) &\defeq \ite{\eqe{m}}{\ite{\eqe{n}}{\procone}{\proctwo}}{\ite{\eqe{n}}{\proctwo}{E[\procone,\proctwo]\langle\tail{m},\tail{n}\rangle}}	
\end{align*}
\sloppy{These definitions are parametric in two arbitrary processes $\procone,\proctwo$. Given two integers $m,n$ represented as lists of zeros, the process $D[\procone,\proctwo]\langle m,n\rangle$ (resp.\ $E[\procone,\proctwo]\langle m,n\rangle$) evaluates to $\procone$ if $m\leq n$ (resp.\ $m=n$) and to $\proctwo$ otherwise.} Then, the memory process $M$ may be defined as follows:
\begin{align*}
	M'(m,n) &\defeq \para{C\langle m,\emstr\rangle}{E[\proczero,M'\langle\apz{m},n\rangle}]\langle m,n\rangle \\
	M(c) &\defeq \inpc{\mathtt{1}}{\evarone}D[M\langle c\rangle,\para{M'\langle\apz{c},x\rangle}{M\langle x\rangle}]\langle x,c\rangle.
\end{align*}
In other words, $M\langle c\rangle$ waits on channel $\mathtt 1$ for an integer $x$, which corresponds to the address of a memory cell. If $x\leq c$, the process returns to its initial state $M\langle c\rangle$. If $x>c$, the process goes to state $M\langle x\rangle$ and, in parallel, creates $x-c$ memory cells, each initialized to zero, at the addresses going from $c+1$ to $x$.

A RAM program is a finite sequence of instructions, which may be represented by mutually recursively 
defined process identifiers $I_1,\ldots,I_n$, $I_j$ standing for the $j$th instruction. These processes 
access the memory $M$ with the instructions allowed by the RAM (load/store operations, possibly with 
indirection), and do simple arithmetic operations (increment/decrement) on the contents of a special 
memory cell located at the channel $\emstr$, and called the \emph{accumulator}. Of course, before accessing the memory cell at address $c$, each instruction must take care of sending $c$ on channel $\mathtt 1$, which has the effect of creating the cell if it does not exist (and has no effect otherwise). The \textsc{halt} 
instruction corresponds to the process
$$
\para{\out\emstr{0\strone}}{\inpc\strone v\out\ochan v},
$$
which reads the value stored in the accumulator and forwards it to the external world through the 
output channel $\ochan$ (the string $\strone$ is arbitrary, as long as it is of length at least $2$ and starts with $1$ to avoid unwanted interferences).

Then, such a RAM program may be represented by the process
$$\label{equ:rampro}
\para{\inpc\ichan\evarone(\para{I_1}{C\langle\emstr,x\rangle})}{M\langle 0\rangle}.$$


Note that this encoding is not quite economic in terms of parallelism: a RAM is a sequential machine, whereas executing its encoding given above on the process machine will use several processors. However, it has the advantage of being easily generalized to PRAMs (see \refsect{PRAM}).

\subsection{Boolean Circuits}
\label{sec:Boolean}
The definitions of $\mathit{And}$ and $\mathit{Or}$ given above may be easily adapted to encode gates, from which a Boolean circuit 
is implemented immediately. For what concerns the interface, a circuit with $m$ inputs and $n$ outputs will be represented by a 
process reading bits from the external input channels $\ichan_1,\ldots,\ichan_m$ and sending bits to the external output channels 
$\ochan_1,\ldots,\ochan_n$.

Note that the behavior of a process representing a Boolean circuit as above is not functional (\refdef{FunBeh}). Therefore, although this encoding shows how circuits may be simulated on the process machine, it does not help extending \refth{FunClasses} to sublinear classes such as \classNC.

\subsection{Parallel Random Access Machines}
\label{sec:PRAM}
A PRAM is composed of several RAM programs running in parallel, each with its own accumulator. They access 
the same memory, including the accumulators of all other programs. At each step, the current instruction 
of every program is executed, and the machine proceeds to the next step only when the execution of all 
instructions is complete; in other words, the parallel components share a \emph{clock}. Concurrent access 
to memory is resolved on a first-come-first-served basis; it is the programmer's responsibility to ensure 
that the cooperation between the PRAM programs is consistent.

A PRAM program with $n$ parallel components is implemented by a process of the following form:
$$\para{\inpc{\ichan_1}{\evarone_1}(\para{J^1_1}{C\langle 0,x_1\rangle})\st\cdots\st
\inpc{\ichan_n}{\evarone_n}(\para{J^n_1}{C\langle 0^n,x_n\rangle})}{\para{M\langle 0^{n+1}\rangle}{K_n}}.$$
The idea is that we put the encodings of the $n$ RAM programs in parallel, plus a clock process $K_n$. 
The $i$th program has an associated internal channel, say $1^i$, with which it communicates with the clock.

If $I_j$ is the process encoding the $j$th instruction of a RAM program, the same instruction in the $i$th 
component of the PRAM is encoded by a process $J^i_j$ of the form $$1^i.I',$$
where the input prefix $1^i$ bounds a variable which does not appear in $I'$, and where $I'$ is $I$ in which 
a bogus string is sent on channel $1^i$ upon completion of the instruction. The encoding of the \textsc{halt} 
instruction is defined so that the $i$th program sends the contents of its accumulator through the external 
output channel $\ochan_i$.

The clock may then be represented by the process defined as follows:
$$K_n\defeq\overline{1}.\ldots.\overline{1^n}.1.\ldots.1^n.K_n,$$
where an output action of the form $\overline{1^i}$ means that the data sent is irrelevant.

In contrast with the representation of Boolean circuits (\refsect{Boolean}), the above encoding of PRAMs does yield functional behaviors. However, since the initial input instruction has a linear cost in the length of the input string, no process representing a PRAM runs in sublinear time. With the present definition of functional behavior, sublinear time classes (such as \classNC) may be captured only if we modify the process machine, for example allowing random access to the bits of the input string.

\section{Proof of \refth{FunClasses}}
\label{sec:MainThm}
Points (1) and (3) are consequences of the encoding of \ref{sec:Turing} and \ref{sec:AltTuring}.

For what concerns point (2), suppose there is a process $\procone$ deciding a language in time $f(n)$ and space 
$g(n)$ on the process machine. First of all, observe that, since $\labsem\procone$ is functional, the 
non-determinism that may be present in $\procone$ is vacuous. Indeed, all choices made during the 
execution of $\procone$ with a given input yield the same output; moreover, since $\bisim_d$ does not 
introduce divergence, the execution of $\procone$ terminates no matter what choice is made. Therefore, 
a deterministic Turing machine may simulate the process machine executing $\procone$ by simulating the 
transitions in any order (for instance, since the coding of a configuration $\cfg{\msetone}{\qfunone}$ 
will actually represent the set $\msetone$ as a list, we may choose to always execute the transition 
given by the first processor of the list).

We proceed to define the Turing machine simulating the execution of $P$. We start by fixing an encoding 
$\encode{(\cdot)}$ of configurations of the process machine as strings of a suitable (finite) alphabet. 
This may safely be supposed to satisfy, for every configuration $\cfgone$, $\len{\encode{\cfgone}}=k\len{\cfgone}$, 
where $k$ is a positive constant. Moreover, we suppose that $\encode{\cfgone}$ has a distinguished processor 
among the ones active in $\cfgone$. The Turing machine is initialized with a binary string $\strone$ on its 
input tape, and the string $\encode{(\cfg{(P,\emptyset)_\emstr}{\epsilon})}$ on its work tape, with the distinguished 
processor being the only active one.

Now, at each step, the Turing machine looks at the state $(\proctwo,\memone)_\ptag$ of the distinguished 
processor. Depending on the shape of $\proctwo$, the Turing machine simulates the appropriate transition. 
If $\proctwo=\inpc{\ichan}{\evarone}\procthree$ (by bisimulation, the channel must be $\ichan$), the Turing 
machine assigns to $\evarone$ the string placed on its input tape. If $\proctwo=\outc{\ochan}{\expone}\procthree$ 
(by bisimulation, the channel must be $\ochan$) the Turing machine halts; it accepts iff $\val\memone\expone=\emstr$ 
(remember the convention used in \refdef{Lang}). In all other cases, the Turing machine simulates the necessary 
operations and updates the encoding of the configuration accordingly. There are only two cases worth of attention. 
The first one is that in which $\proctwo=\inpc{\expone}{\evarone}\procthree$ and the queue $\val\memone\expone$ 
is empty. Then, the Turing machine simply selects a new distinguished processor (taking care of finding
one which is not blocked), and simulate the next step from it. The second one is that in which 
$\proctwo=\para{\procthree}{\procfour}$. In that case, after the spawning is simulated, the distinguished 
processor is chosen to be the one executing $\procthree$.

Thanks to the properties of $\bisim_d$ (in particular the fact that it does not introduce divergence), the 
above Turing machine is guaranteed to terminate with the correct state of acceptance/rejection w.r.t.\ the 
input string, which we suppose to be of length $n$. Note that, globally, the Turing machine actually simulates 
a run $\cfgone_0\mtautrans\cfgone_1\mtautrans\cdots\mtautrans\cfgone_t$. Consider now the tree defined as 
follows: the nodes at level $0\leq i\leq t$ represent the active processors in $\cfgone_i$; the root 
represents the only active processor in $\cfgone_0$ and, at each level, the siblings of a node are either 
one node (the processor did not spawn) or two nodes (the processor spawned), or none at all (the processor 
went idle). Now, obviously each transition in the run requires an active processor; therefore, $t$ is 
bounded by the size of the above tree. But such a tree has height bounded by $f(n)$ and width bounded 
by $g(n)$, so $t\leq f(n)g(n)$. The simulation of a single transition $\cfgone_i\mtautrans\cfgone_{i+1}$
of the run may be assumed to require at most something like $c\cdot{\max(\len{\cfgone_i},\len{\cfgone_{i+1}})}^h$ 
Turing machine steps, where $c,h$ are positive constants, which is bounded by $ck\cdot g(n)^h$. 
Hence, the runtime of the Turing machine is bounded by $ck\cdot f(n)g(n)^{h+1}$.

\section{Proof of \refth{Serv}}
\label{sec:ServThm}
Let $h:\Words\rightarrow\Words$. Point 1 is obvious: from a process $Q$ implementing $\behserv_h$ in time $f$ and space $g$, we syntactically extract a process implementing $\behfun_h$ by simply tracing the execution of $Q$ after the first input is given, halting immediately after the output is computed. The time and space bounds are obviously the same.

For point 2, we take an implementation $P$ of the functional behavior $\behfun_h$ and transform it into a server process which reads an external input, spawns a copy of $P$ to compute $h$ and waits for a further input. If $o_n$ is the event corresponding to the output of $h(\strone_n)$ after the strings $\strone_1,\ldots,\strone_n$ have been read, we have that any chain of events whose maximum is $o_n$ has the form $i_1<e^\tau_1<\cdots<i_n<e^\tau_n<d^\tau_1<\cdots<d^\tau_m<o_n$, where the $i_j$ are the input events reading $\strone_j$, the $e^\tau_j$ are spawn instructions, and $d^\tau_1,\ldots,d^\tau_m$ are events occurring in the computation of $h(\strone_n)$, so their total time cost is at most $f(\len{\strone_n})$. Therefore, assuming $f$ to be monotonic and at least linear, we have $\tcost{o_n}=\bigo(f(\len{\strone_n})+\sum_{j=1}^n(1+\len{\strone_j}))=\bigo(f(\inpsize{o_n})+\inpsize{o_n})=\bigo(f(\inpsize{o_n}))$. On the other hand, the space cost is identical.

\section{Reactive Turing machines}
\label{sec:RTM}
\newcommand{\lts}{\mbox{\ensuremath{\text{\textsc{lts}}}}}
\let\oldTo\to
\renewcommand{\to}[1][]{%
  \xrightarrow{#1}%
}
\renewcommand{\ite}[3]{%
  [#1].\left(#2,#3\right)%
}
\newcommand{\Sum}[4]{%
  \SumExpl{1 \leq #1 \leq #2}{#3}{{#4}_i}
}
\newcommand{\SumExpl}[3]{%
  \sum_{#1}^{#2}{#3}%
}

\newcommand{\Actions}{\ensuremath{\mathcal{A}}}
\newcommand{\actone}{a}
\newcommand{\acttwo}{b}
\newcommand{\bcode}[1]{%
  \underline{#1}%
}

We base our discussion of reactive Turing machines~\cite{BaetenLuttikTilburg}
on the following two definitions from~\cite{DBLP:conf/icdcit/BaetenLT11}
because the latter work mentions explicitly the alphabets
(while the former work leaves this information implicit). 

\begin{definition}[Reactive Turing machine]
  \label{def:RTM}
  A \emph{Reactive Turing machine}
  is a six-tuple 
  $M = (S,\mathcal{A},\mathcal{D},\to, \uparrow, \downarrow)$ where:
  \begin{enumerate}
  \item
    $S$ is a finite set of states,
  \item 
    $\mathcal{A}$ is a finite action alphabet, 
    $\mathcal{A}_{\tau}$ also includes the silent step ${\tau}$,
  \item 
    $\mathcal{D}$ is a finite data alphabet, 
    we add a special symbol ${\Box}$ standing for a blank
    and put $\mathcal{D}_{\Box} = \mathcal{D} \cup  \{{\Box}\}$,
  \item
    ${\to} \subseteq {
      S \times
      \mathcal{A}_{\tau} \times
      \mathcal{D}_{\Box}
      \times
      \mathcal{D}_{\Box} 
      \times
      \{L,R\}
      \times S}$ is a finite set of transitions or steps,
  \item 
    ${\uparrow} \in  S$ is the initial state,
  \item 
    ${\downarrow} \subseteq  S$ is the set of final states.
  \end{enumerate}
\end{definition}

Intuitively, 
the machine starts at the initial state $\uparrow$ with an empty tape; 
as usual a configuration consists of the state and the tape contents
(an element of $\mathcal{D}_{\Box}^*$) 
and a position of the read-write head with a single symbol. 
The possible transitions of a configuration depend on the state and the symbol under the head. 
Each transition comes with a (possibly observable) action
and a change on the tape as usual. 

\begin{definition}[LTS of an RTM]
  \label{def:ltsRTM}
  Let
  $M = (S,\mathcal{A},\mathcal{D},\to, \uparrow, \downarrow)$
  be an \textsc{rtm}. 
  The labeled transition system of $M$, denoted by $\mathcal{T} (M)$, 
  is defined as follows. 
  \begin{enumerate}
  \item     
    The set of states is the set of configurations
    $\{(s, \delta ) \mid  s \in  S, \delta  \text{ a tape instance} \}$.
  \item 
    The transition relation $\to$ is the least relation that satisfies 
    the following two properties
    for all $a \in  A_{\tau}$, $d, e \in  D_{\Box}$, and $\delta , \zeta \in D^{*}_{\Box}$.
    \begin{itemize}
    \item 
      $(s, \delta \bar d \zeta) \to[a] (t, (\delta\bar{\,})e\zeta)$ 
      iff 
      $s \to[{a[d/e]L}]t$,
    \item 
      $(s, \delta \bar d \zeta) \to[a] (t, \delta e (\bar{\,}\zeta))$
      iff $s\to[{a[d/e]R}] t$.
    \end{itemize}
  \item 
    The initial state is $(\uparrow , \bar{\Box})$.
  \item 
    $(s, \delta ) \downarrow$ iff $s \downarrow$.
  \end{enumerate}
\end{definition}
We have not introduced all notation as we have the usual use of a ``marker'' on tape symbols, 
i.e., the marked symbol $\bar d$ for each $d \in \mathcal{D}$.

For the encoding of \textsc{rtm}s into processes, 
we assume that the action alphabet $\Actions$ has a suitable coding into words, i.e., 
that each $\actone \in \mathcal{A}$ has a binary code $\bcode{\actone} \in \Words \setminus \{ \emstr\}$
such that $\bcode{\actone}$ is a palindrome\footnote{%
  This makes the encoding simpler and causes only a constant factor in size and time.%
}. 
Now, we shall code each action
$\actone\in\Actions$ 
by the prefix
$\out{\ochan}{\bcode{\actone}}$. 
First, we ignore termination, 
which is a minor point 
as discussed in Remark~\ref{rem:termination}. 
Similarly, states and data symbols are assumed to have binary codes. 

The encoding 
will use of the following auxiliary definition of internal choice.
\begin{definition}[Choice at a channel]
  \label{def:choiceAtChannel}
  Let $\strone\in\Words$ be a word, 
  let $n\geq 0$ be a natural number, 
  and let $\procone_1, \dots, \procone_n$ be a finite family of processes. 
  Now, we define the process $\Sum{i}{n}{\strone}{\procone}$ inductively as follows.
  \begin{align*}
    \Sum{i}{0}{\strone}{\procone}
    &= \proczero 
    \\
    \Sum{i}{m+1}{\strone}{\procone}
    &= 
    \para{%
      \para{%
        \outc{\strone}{\mathtt{0}}\proczero%
      }{%
        \outc{\strone}{\mathtt{1}}\proczero%
      }
    }{%
      \inpc{\strone}{\evarone}\inpc{\strone}{\evartwo}\ite{\eqz{\evartwo}}{\Sum{i}{m}{\strone}{\procone}}{\procone_{m+1}}
    }
  \end{align*}
\end{definition}
If $\strone$ is omitted, it is automatically $\emstr$, 
i.e., $\Sum{i}{m}{}{\procone}$ stands for $\Sum{i}{m}{\emstr}{\procone}$. 
We write $\procone_1 +^{\strone} \procone_2$ for $\Sum{i}{2}{\strone}{\procone}$
and $\procone_1 + \procone_2$ for $\Sum{i}{2}{}{\procone}$.

\newcommand{\head}{%
  d%
}
\newcommand{\daeh}{%
  e%
}

\newcommand{\lstr}{%
  \phi%
}
\newcommand{\rstr}{%
  \psi%
}
\newcommand{\xstr}{%
  \strone%
}
\newcommand{\lcel}{%
  \sigma%
}
\newcommand{\rcel}{%
  \rho%
}
\newcommand{\xcel}{%
  \theta%
}

\newcommand{\rev}[1]{%
  \overleftarrow{#1}%
}

The encoding of an \textsc{rtm}  will be a recursive process with six string parameters:
besides the state $\stateone$, 
the symbol on the tape under the head $\head$, 
and the left and right tape contents $\lstr$ and $\rstr$, 
we will use two auxiliary strings $\lcel, \rcel \in \Words$, 
which ``divide'' $\lstr$ and $\rstr$ into cells of suitable length. 
We shall also have an auxiliary processes for the manipulation of pairs of these strings, 
each of which is used to represent a stack: 
the first string of a pair contains the actual stack content
and the other string is the $\mathtt{0}$-separated list of the 
unary encodings of the lengths of the elements in the stack.
For example the pair $\langle \mathtt{01100000}, \mathtt{1111011110}\rangle$
is a stack with element $\mathtt{0110}$ on top of element $\mathtt{0000}$. 
The complete encoding of \textsc{rtm}s is now as follows.
\begin{definition}[Encoding RTMs without termination]
  \label{def:encodingRTM}
  Given a reactive Turing machine $M = (S,\mathcal{A},\mathcal{D},\to, \uparrow, \downarrow)$
  with ${\downarrow} = \emptyset$, 
  its encoding is the process $T\langle\bcode{\stateone}, \emstr, \emstr,\bcode{\Box}, \emstr,\emstr\rangle$, 
  with the following definitions. 
  \begin{align*}
    &\!\!\!\!\!\!\!\!\!T(\stateone, \lstr, \lcel, \head, \rstr,\rcel) =  \\
    &\!\!\!\!\!\!\!\!\!\SumExpl{s\to[{\actone[\head/\daeh]R}]t}{}{%
      \outc{\ochan}{\bcode{\actone}}
      (%
      \para{\mathit{Pop}\langle\rstr,\rcel,\emstr\rangle}{
        \inpc{\mathtt{0}}{\head'}%
        \inpc{\mathtt{0}}{\rstr'}%
        \inpc{\mathtt{0}}{\rcel'}%
        T\langle\bcode{t}, \bcode{\daeh}\lstr, \mathtt{1}^{\len{\bcode{\daeh}}}\mathtt{0}\lcel, \head', \rstr', \rcel' \rangle
      }
      )
    }
    +{} \\
    &\!\!\!\!\!\!\!\!\!\SumExpl{s\to[{\actone[\head/\daeh]L}]t}{}{%
      \outc{\ochan}{\bcode{\actone}}
      (%
      \para{\mathit{Pop}\langle\lstr,\lcel,\emstr\rangle}{
        \inpc{\mathtt{0}}{\head'}%
        \inpc{\mathtt{0}}{\lstr'}%
        \inpc{\mathtt{0}}{\lcel'}%
        T\langle\bcode{t},  \lstr', \lcel', \head', \bcode{\daeh}\rstr, \mathtt{1}^{\len{\bcode{\daeh}}}\mathtt{0}\rcel \rangle
      }
      )
    }
    \\
    \mathit{Pop}(\xstr, \xcel, \strtwo) 
    &= 
    \ite
    {\eqe{\xcel}}
    {%
      \outc{\mathtt{0}}{\bcode{\Box}}
      \outc{\mathtt{0}}{\emstr}
      \outc{\mathtt{0}}{\emstr}
      \proczero
    }
    {
      \ite
      {\eqz{\xcel}}
      {
        \outc{\mathtt{0}}{\strtwo}
        \outc{\mathtt{0}}{\xstr}
        \outc{\mathtt{0}}{\tail{\xcel}}
        \proczero
      }
      {
        \procone
      }
    }%
    \intertext{ where $\procone$  is
    }
    \procone
    &= 
    \ite
    {\eqz{\xcel}}
    {%
      \mathit{Pop}\langle\tail{\xstr},\tail{\xcel}, \mathtt{0}\strtwo\rangle
    }
    {%
      \mathit{Pop}\langle\tail{\xstr},\tail{\xcel}, \mathtt{1}\strtwo\rangle
    }. 
  \end{align*}
\end{definition}%
The $\mathit{Pop}$-process takes a pair of words -- representing a stack -- and an ``accumulator'' 
for the bit-wise compilation of the top element; 
it ``returns'' the top symbol and the updated ``stack'' via consecutive outputs on channel $\mathtt{0}$. 


 \begin{proposition}[Simulation of RTMs]
   \label{prop:sim}
   Every \textsc{rtm} with the empty set of final states 
   can be simulated by a process. 
 \end{proposition}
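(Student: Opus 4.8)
The plan is to show that the process $T\langle\bcode{\stateone}, \emstr, \emstr,\bcode{\Box}, \emstr,\emstr\rangle$ of \refdef{encodingRTM}, when executed on the process machine, generates an \textsc{lts} that is weakly bisimilar (in fact, $\bisim$-equivalent, and since the simulating process is non-divergent when $M$ is, also $\bisim_d$-equivalent) to $\mathcal{T}(M)$ of \refdef{ltsRTM}. First I would set up the correspondence between configurations: a configuration $(t,\delta\bar d\zeta)$ of $M$ — with left tape contents $\delta$ and right tape contents $\zeta$ — should be matched with process-machine configurations reachable from $T\langle\bcode{t}, \lstr, \lcel, \bcode{d}, \rstr, \rcel\rangle$ where $\lstr$ (together with its cell-length list $\lcel$) encodes $\rev\delta$ and $\rstr$ (with $\rcel$) encodes $\zeta$. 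I would make precise the ``stack encoding'' invariant described before \refdef{encodingRTM}: the pair $\langle\lstr,\lcel\rangle$ represents a list of data symbols whose binary codes $\bcode{d_1},\dots,\bcode{d_k}$ are concatenated in $\lstr$, with $\lcel = \mathtt 1^{\len{\bcode{d_1}}}\mathtt 0\cdots\mathtt 1^{\len{\bcode{d_k}}}\mathtt 0$ recording where each code begins and ends.

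Next I would verify the key simulation lemma: a single step $(s,\delta\bar d\zeta)\to[a](t,\cdots)$ of $M$ is matched, on the process side, by exactly one firing of the appropriate summand of $T$ — the one indexed by the transition $s\to[a[d/e]R]t$ (or $\dots L\dots$) — which performs $\out\ochan{\bcode a}$ as its single visible action, followed by a bounded sequence of internal $\tau$-transitions. These $\tau$-steps run the $\mathit{Pop}$ subroutine on $\langle\rstr,\rcel\rangle$ (resp.\ $\langle\lstr,\lcel\rangle$) to extract the new head symbol $\head'$ and the truncated stack $\langle\rstr',\rcel'\rangle$, receive them back on channel $\mathtt 0$, push $\bcode e$ onto the other stack, and re-enter $T$ at the successor state. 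Here I must check the boundary cases: when the relevant tape side is empty ($\xcel=\emstr$), $\mathit{Pop}$ emits the blank code $\bcode\Box$ together with empty strings, which is exactly the behaviour needed to model the semi-infinite tape growing into fresh blanks, matching clauses of \refdef{ltsRTM}. Conversely, every $\tau$-path out of a $T$-configuration either stays within one such ``macro-step'' or reaches the next $T$-configuration, so no spurious visible actions and no divergence are introduced; in particular, since $M$ has no final states, divergence on the process side corresponds exactly to divergence of $M$, so the bisimulation does not introduce divergence.

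Having the step-correspondence, I would assemble the bisimulation relation $\simone$ as the set of pairs $\big((t,\delta\bar d\zeta),\, C\big)$ where $C$ is any process-machine configuration lying on the $\tau$-trajectory from $T\langle\bcode{t},\lstr,\lcel,\bcode d,\rstr,\rcel\rangle$ up to (but not including) the next $\out\ochan{}$ action, with $\langle\lstr,\lcel\rangle,\langle\rstr,\rcel\rangle$ encoding $\rev\delta,\zeta$ as above, plus the pair relating the two initial states. Checking that $\simone$ is a bisimulation reduces to the step-correspondence in both directions: a visible move of $M$ (there are none, since \textsc{rtm}s here have only the internal structure — actually each transition carries an action $a\in\mathcal A_\tau$, so when $a=\tau$ it is matched by the $\out\ochan{\bcode\tau}$ followed by $\tau$'s, and I should note that $\bcode\tau$ is still a concrete visible output — here I would either require $\tau\notin\mathcal A$ or relax to the convention that $\bcode\tau$ is not observable; I would pick whichever is cleanest and state it) and internal reshuffling on either side is absorbed by the trajectory. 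Then $\mathcal T(M)\bisim T\langle\cdots\rangle$ follows, which is the statement (appealing to \refdef{ltsRTM} and the remark that final states are handled separately, Remark~\ref{rem:termination}).

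The main obstacle I expect is the careful bookkeeping of the cell-length lists $\lcel,\rcel$ under the $\mathit{Pop}$ routine: one must confirm that after popping the top code from, say, the right stack, the residual pair $\langle\rstr',\rcel'\rangle$ still satisfies the encoding invariant, and that pushing $\bcode e$ with prefix $\mathtt 1^{\len{\bcode e}}\mathtt 0$ onto the left stack preserves it on that side — and all of this while the head symbol itself is kept ``unpacked'' in the fourth parameter of $T$ rather than inside either stack. A secondary but real subtlety is that the $\mathtt 0$-channel communications inside a macro-step must not interfere with anything else: since within a single macro-step only the spawned $\mathit{Pop}$ process and the continuation of $T$ touch channel $\mathtt 0$, and they do so in a fixed producer/consumer order, the queue discipline of the process machine (and the independence relation on the \textsc{walts}) makes this deterministic; I would note this but not belabor it, since the statement only concerns the \textsc{lts}, not costs.
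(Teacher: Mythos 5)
Your overall plan---the stack-encoding invariant for $\lstr,\lcel$ and $\rstr,\rcel$, a macro-step lemma matching one RTM transition with one visible output $\out{\ochan}{\underline{a}}$ plus boundedly many $\sa$-steps through the $\mathit{Pop}$ routine, and the check that the queues at $\mathtt 0$ and $\emstr$ never carry garbage---is exactly what the paper's (three-sentence) proof idea gestures at, and that bookkeeping is fine; you even spot the issue of $\sa$-labelled RTM transitions, which the paper passes over in silence. The genuine gap is your opening claim that the encoding is $\bisim$- (indeed $\bisim_d$-) equivalent to $\mathcal{T}(M)$, established via a relation that pairs an RTM configuration with \emph{every} machine configuration on the $\sa$-trajectory up to the next output. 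The choice-at-a-channel construct resolves the sum over RTM transitions by internal steps: the order in which the two $\mathbf{Snd}$ transitions enqueue $\mathtt 0$ and $\mathtt 1$ on the queue at $\emstr$ commits the process to a single summand before any visible action occurs. From a pair consisting of an RTM configuration $s$ and a committed machine configuration, a second RTM transition of $s$ with a different label cannot be matched, so your relation is not a weak bisimulation; worse, no relation can repair this, because for a nondeterministic RTM the encoded state is essentially an internal choice $\sa.\out{\ochan}{\underline{a}}\ldots + \sa.\out{\ochan}{\underline{b}}\ldots$ facing the RTM's external choice between $a$ and $b$, and these are not weakly bisimilar (the standard $a+b$ versus $\sa.a+\sa.b$ counterexample). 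Only for deterministic RTMs would your bisimulation go through.

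What your macro-step lemma does deliver is a weak simulation in each direction: the RTM's chosen transition can be mimicked by resolving the internal choice accordingly, and conversely every committed output of the machine corresponds to an RTM step---i.e., a mutual-simulation (coupled-similarity-style) correspondence, which is precisely the kind of weaker equivalence the paper itself says suffices, and is all that the proposition (``can be simulated'') actually asserts. So the fix is modest: restrict the relation to the canonical $T$-configurations (pre-commitment) when exhibiting the simulation of $\mathcal{T}(M)$ by the process, state the converse direction as a separate simulation rather than folding both into one bisimulation, and drop the $\bisim_d$ claim (divergence-sensitivity is then not at issue, since $M$ has no final states and each macro-step is finite). With that adjustment your argument proves the statement, and in substance it coincides with the paper's intended verification.
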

 \begin{proof}[idea]
   All non-determinism of the \textsc{rtm} is captured 
   at the root of the definition of the process identifier $T$
   of Definition~\ref{def:encodingRTM}. 
   Now, 
   it remains only to verify that the rest of the process actually encodes the reactive Turing machine as expected. 
   Note that there will never be any ``garbage'' sending actions  on the ``channels'' $\mathtt{0}$ and $\emstr$. 
 \end{proof}

 \begin{remark}[Termination]
   \label{rem:termination}
   As for termination, 
   without changing much, 
   we could add a new process constant $\mathbf{1}$ to our process syntax; 
   it would essentially behave as $\mathbf{0}$ 
   with the sole difference that if all processes in a configuration have reduced to $\mathbf{1}$
   (in an arbitrary environment and with arbitrary queues), 
   then the configuration is a final configuration. 
 \end{remark}





\paragraph{Effective  transition systems}
An \textsc{lts} is effective,
according to \cite[Definition~5]{BaetenLuttikTilburg},
if the transition relation is recursively enumerable.
Note that we again ignore termination for the sake of simplicity (cf.~Remark~\ref{rem:termination}). 
Clearly, the \textsc{lts}s of processes are recursively enumerable.

\end{document}